\pdfoutput=1
\documentclass[10pt,pra,aps,superscriptaddress,nofootinbib,twocolumn,longbibliography]{revtex4-1}
\usepackage{amsmath,bbm}
\usepackage{amsthm}
\usepackage{amsmath}
\usepackage{latexsym}
\usepackage{amssymb}
\usepackage{float}
\usepackage{graphicx}           
\usepackage{color}
\usepackage{xcolor}
\usepackage{mathpazo}
\usepackage{comment}
\usepackage{enumitem}
\usepackage{multirow}
\usepackage{setspace}
\usepackage{subcaption}
\usepackage[colorlinks=true,linkcolor=blue,citecolor=magenta,urlcolor=blue]{hyperref}
\usepackage{svg}

\newcommand{\be}{\begin{equation}}
\newcommand{\ee}{\end{equation}}
\newcommand{\bea}{\begin{eqnarray}}
\newcommand{\eea}{\end{eqnarray}}
\def\squareforqed{\hbox{\rlap{$\sqcap$}$\sqcup$}}
\def\qed{\ifmmode\squareforqed\else{\unskip\nobreak\hfil
\penalty50\hskip1em\null\nobreak\hfil\squareforqed
\parfillskip=0pt\finalhyphendemerits=0\endgraf}\fi}
\def\endenv{\ifmmode\;\else{\unskip\nobreak\hfil
\penalty50\hskip1em\null\nobreak\hfil\;
\parfillskip=0pt\finalhyphendemerits=0\endgraf}\fi}

\newcommand{\tr}{\text{Tr}}
\newcommand{\I}{\mathbbm{I}}

\newcommand{\ket}[1]{|#1\rangle}
\newcommand{\bra}[1]{\langle#1|}

\newcommand{\la}{\langle}
\newcommand{\ra}{\rangle}

\makeatletter
\newtheorem*{rep@theorem}{\rep@title}
\newcommand{\newreptheorem}[2]{%
\newenvironment{rep#1}[1]{%
 \def\rep@title{#2 \ref{##1}}%
 \begin{rep@theorem}}%
 {\end{rep@theorem}}}
\makeatother

\newtheorem{thm}{Theorem}
\newreptheorem{thm}{Theorem}
\newtheorem{lemma}{Lemma}
\newtheorem{definition}{Definition}

\newtheorem{obs}{Observation}

\newtheorem{proposition}{Proposition}

\newtheorem{example}{Example}

\begin{document}

\title{Nonlocality without entanglement in exclusion of quantum states}

\author{Satyaki Manna}
\email{mannasatyaki@gmail.com}
\affiliation{Department of Physics, School of Basic Sciences, Indian Institute of Technology Bhubaneswar, Odisha 752050, India}
\author{Anandamay Das Bhowmik}
\email{ananda.adb@gmail.com}
\affiliation{S. N. Bose National Centre for Basic Sciences, Block JD, Sector III, Salt Lake, Kolkata 700 106, India}
\affiliation{School of Physics, Indian Institute of Science Education and Research Thiruvananthapuram, Kerala 695551, India}

\begin{abstract}
    We study the task of quantum state exclusion, focusing on antidistinguishability (elimination of one state) and $2$-antidistinguishability (elimination of two states), under global measurements and local operations with classical communication (LOCC). We also introduce weak and strong notions of antidistinguishability ($2$-antidistinguishability) depending on whether all states or all $2$-tuples are exhaustively eliminated. Our results reveal striking differences between state exclusion and the more familiar task of state discrimination. In particular, we show that LOCC antidistinguishability of multipartite product states is symmetric with respect to the initiating party but this symmetry breaks down for higher-order $2$-antidistinguishability. Most notably, we establish a manifestation of \emph{nonlocality without entanglement} in the context of state exclusion: we prove that three bipartite product states can be globally antidistinguishable while failing to be LOCC antidistinguishable, demonstrating that three is the minimal number of states required for this phenomenon. We further extend this separation to $2$-antidistinguishability and present example exhibiting the same type of nonlocality. At last, we provide an antidistinguishable tripartite product states that are not LOCC antidistinguishable across any bipartition, which ensures the phenomenon of \emph{genuine nonlocality without entanglement} in this framework.
\end{abstract}

\maketitle


\section{Introduction}
 \textit{Nonlocality without entanglement} \cite{benett} is one of the most striking features of quantum theory. It arises when a set of locally preparable multipartite quantum states, although mutually orthogonal, cannot be perfectly distinguished using only local operations and classical communication (LOCC). Despite the absence of entanglement, the global structure of the composite system encodes information that remains inaccessible to purely local measurements. Owing to its fundamental importance and potential applications in quantum information processing, nonlocality without entanglement has been extensively studied \cite{benett,hardy,watrous05,Virmani_2001,vedral,ghosh_s,Halder,Bhattacharya,manna2025,gupta,ylng-5j64,2k5d-bprn}.

While this phenomenon has been predominantly investigated in the context of quantum state discrimination, it is natural to ask whether similar counterintuitive behavior can emerge in other informational tasks. In this work, we address this question by focusing on quantum state exclusion, a weaker variant of state discrimination in which the goal is to eliminate \(x\) states from a known set of \(n(>x)\) possible states \cite{Caves02,barett,Johnston2025tightbounds,bandyopadhyay,Heinosaari_2018,manna202,manna20,Manna2,yao2025,uola,webb,ji2025,ji2025conversebounds}. Specifically, we study this task for a collection of multipartite quantum states under two distinct operational paradigms: global operations, where a single party has access to the entire system, and local operations assisted by classical communication (LOCC), where spatially separated parties are restricted to local measurements on their respective subsystems.

The notions of distinguishability and exclusion of physical processes play a central role in deepening our understanding of quantum theory and, more broadly, the physical world \cite{Hellstrom,Chefles_2000,Manna2,acin2001,duan2009,watrous05,bsxv-q9x7,ghosh,njp2021,manna2026,zhu2026}. Whereas state distinguishability aims to identify which state from a known ensemble has been prepared, state exclusion seeks to rule out certain possibilities without fully identifying the prepared state. A particularly significant instance of this task is antidistinguishability\cite{Caves02,barett,Heinosaari_2018,manna20,manna202}, which involves the elimination of a single state and has been instrumental in the study of \(\psi\)-epistemic interpretations of quantum theory \cite{Pusey_2012,ray2024,bhowmik2022,leifer2014,barett,chaturvedi2021,Chaturvedi2020,ray2025}, quantum contextuality \cite{leifer}, and other foundational aspects. Beyond its conceptual relevance, quantum process exclusion has also found diverse applications in quantum information processing and communication protocols \cite{Manna1,pandit,bae2025}.

The primary goal of this paper is to study global and LOCC antidistinguishability (or, more generally, state exclusion) of sets of multipartite quantum states. By studying this, we ultimately establish \emph{nonlocality without entanglement} in the context of state exclusion. Before addressing the main problem, we provide a brief review of antidistinguishability, which concerns the elimination of a single state from a set, and its generalization to $2$-antidistinguishability, which involves the elimination of all $2$-tuples from a given set. In this section, we introduce two notions of antidistinguishability ($2$-antidistinguishability) based on whether all states (all pair of states) in the set are exhausted. We refer to the task of eliminating a single state (or an $2$-tuple of states) as antidistinguishability (or $2$-antidistinguishability), while the additional requirement that every state (or every 2-tuple) in the set be eliminated at least once is referred to as strong antidistinguishability (or strong 2-antidistinguishability).
We then describe the general protocols for global and LOCC state exclusion, followed by the presentation of our main results. Although the protocols are formulated for multipartite systems, most of the results are derived in the bipartite setting. We begin by establishing a sufficient condition for three general bipartite states to be LOCC antidistinguishable and, simultaneously, demonstrate that this condition is not necessary by providing a counterexample. We showed that the same condition is true for strong antidistinguishability too. 
Then we prove that if a set consists  multipartite product states, LOCC antidistinguishability does not depend on the commencing party.
We then investigate the role of the initiating party in the more general task of $2$-antidistinguishability and find that, in contrast to the single-state exclusion case, the success of the task can depend on the starter even when only product states are considered. Then we move to our main theme of \emph{nonlocality without entanglement}. We prove that there exist three bipartite product states that are antidistinguishable under global measurements but not under LOCC, thereby establishing that three is the minimal number of states required to exhibit this phenomenon. We further extend this result to the case of $2$-antidistinguishability, demonstrating again an advantage of global protocols over LOCC strategies.
Finally, analogous to results known for state distinguishability, we present a manifestation of strong or genuine multipartite nonlocality in the context of antidistinguishability. In particular, we show that there exist tripartite product states that are not LOCC antidistinguishable across any bipartition, while still being globally antidistinguishable. This establishes an exclusion-based analogue of genuine multipartite nonlocality.

The paper begins with a preliminary section that provides a brief review of quantum state exclusion. We then describe both global and LOCC protocols for the elimination of quantum states, in analogy with state distinguishability. 
Our results are presented in two main sections. The first section focuses on LOCC protocols, while the second demonstrates instances of \emph{nonlocality without entanglement}. 
Finally, the conclusion summarizes the key findings and outlines open problems, along with possible directions for future research.

\section{Preliminaries}
Before delving into the main objectives of the paper, it is necessary to introduce some definitions and previously established results that will be used frequently throughout our analysis. In this spirit, we begin by defining the antidistinguishability of a set of quantum states.

\subsubsection*{Antidistinguishability}

Assume, we are given $n$ previously known quantum states $\{\rho_k\}_{k=1}^n$ sampled from the equal probability distribution.
Antidistinguishability of $n$ quantum states $\{\rho_k\}_{k=1}^n$ is defined as \cite{Heinosaari_2018},
\bea  
\mathcal{A}[\{\rho_k\}_k] &= & \max_{\{M_a\}_a}\Bigg\{ \frac1n\sum_{k,a}  p(a\neq k|\rho_k,M_a)\Bigg\}.
\eea
 We are taking optimization over the measurement $\{M_a\}_{a=1}^n$ such that the above quantity attains its maximum value. As $\sum_a (p(a\neq k|\rho_k,M)+p(a=k|\rho_k,M))=1$, the above expression becomes,
\bea\label{pA}
\mathcal{A}[\{\rho_k\}_k] 
&=& 1 - \frac{1}{n}\min_{\{M_k\}_k}\left\{\sum_{k} \ \tr(\rho_k M_k)\right\} ,
\eea 
where $\{M_k\}_k$ are the POVM (Positive Operator Valued Measurement) elements of the optimum measurement $M$.

\begin{example}
    Consider a set $\mathbbm{S}_1$ of following three qubit states:
    \[\mathbbm{S}_1=\left\{\ket{0},\ket{v_+},\ket{v_-}\right\},
    \]
    where $\ket{v_+}=\frac12\ket{0}+\frac{\sqrt{3}}{2}\ket{1}$ and $\ket{v_-}=\frac12\ket{0}-\frac{\sqrt{3}}{2}\ket{1}$.\\
    The set $\mathbbm{S}_1$ is an antidistinguishable set. The optimum measurement has the POVM elements as follows:
    \[
    M_1=\frac23\ket{1}\bra{1}, M_2=\frac{2}{3}\ket{v_+^\perp}\bra{v_+^\perp}, M_3=\frac23\ket{v_-^\perp}\bra{v_-^\perp},
    \]
    where $\ket{v_+^\perp}=\frac{\sqrt{3}}{2}\ket{0}-\frac12\ket{1}$ and $\ket{v_-^\perp}=\frac{\sqrt{3}}{2}\ket{0}+\frac12\ket{1}$.
\end{example}

\begin{example}
We can think of another example of a set of states  $\mathbbm{S}_2$.
\[
\mathbbm{S}_2=\left\{\ket{0},\ket{1},\ket{+}\right\}.
\]
It can be easily checked that this set of states is also antidistinguishable with the measurement having POVM elements
\[
M_1=\ket{1}\bra{1}, M_2=\ket{0}\bra{0}, M_3=\mathbf{0},
\]
where $\mathbf{0}$ is the null operator.
\end{example}
In this experiment, the third outcome never occurs. But it does not contradict our definition of antidistinguishability of \eqref{pA}. But can we find an optimum measurement which has all non-null operators? 

The necessary projectors are $\ket{1}\bra{1},\ket{0}\bra{0}$ and $\ket{-}\bra{-}$. To make a valid measurements, we need to show,
\bea
c_1 \ket{1}\bra{1} +c_2 \ket{0}\bra{0} + c_3 \ket{-}\bra{-}=\I_2,
\eea
where $c_1,c_2$ and $c_3$ are real, positive non-zero numbers and $c_1+c_2+c_3=2$.
It can be proved that the only solution to the above equation is $c_1=c_2=1$ and $c_3=0$. So the optimum measurement does not consists of all three non-null operator. $I_d$ is the identity operator in dimension $d$.

From these two examples, we can make an interesting observation.
\begin{obs}\label{o1}
    If any subset of a set of states is antidistinguishable, then the set must be antidistinguishable.
\end{obs}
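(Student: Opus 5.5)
The plan is to build the antidistinguishing measurement for the whole set directly from the one that works on the subset, by padding it with null operators. Here "antidistinguishable" means perfect antidistinguishability, i.e.\ $\mathcal{A}=1$ in \eqref{pA}. Equivalently, there is a POVM $\{M_k\}_k$ with $\tr(\rho_k M_k)=0$ for every $k$.

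Let the full set be $\{\rho_k\}_{k=1}^n$ and suppose the subset $\{\rho_k\}_{k\in S}$, with $S\subseteq\{1,\dots,n\}$, is antidistinguishable. Then there is a POVM $\{N_k\}_{k\in S}$ on the same Hilbert space satisfying $\sum_{k\in S}N_k=\I_d$ and $\tr(\rho_k N_k)=0$ for all $k\in S$. Define an $n$-outcome measurement on the full set by
\be
M_k=N_k \ \ (k\in S),\qquad M_k=\mathbf{0}\ \ (k\notin S).
\ee

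The next step is to check that this is a valid POVM. Each $M_k$ is positive semidefinite, and $\sum_{k=1}^n M_k=\sum_{k\in S}N_k=\I_d$. For $k\in S$ we have $\tr(\rho_kM_k)=\tr(\rho_kN_k)=0$. For $k\notin S$ we have $\tr(\rho_k\mathbf{0})=0$ trivially. Hence $\sum_k\tr(\rho_kM_k)=0$, and by \eqref{pA} this gives $\mathcal{A}[\{\rho_k\}_k]=1$.

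Operationally, the construction says: run the subset's measurement, and whenever outcome $a\in S$ occurs, declare "not $\rho_a$". If the true state lies in the subset, this claim is correct by assumption. If the true state lies outside the subset, then $a\neq k$ automatically, so the claim is again correct. This is exactly the mechanism of $\mathbbm{S}_2$, where $\{\ket{0},\ket{1}\}$ is antidistinguishable and $M_3=\mathbf{0}$.

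There is essentially no obstacle here. The only point needing care is that the definition in \eqref{pA} allows null POVM elements, which the paper explicitly remarks after $\mathbbm{S}_2$. If one instead demanded all elements non-null, the padding would fail and this argument would not apply.
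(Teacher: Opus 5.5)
Your proposal is correct and is essentially the paper's argument: the paper also takes the subset's antidistinguishing POVM and pads it with null operators for the remaining $n-l$ states. Your explicit checks that the padded operators still sum to the identity and that every $\tr(\rho_k M_k)$ vanishes simply spell out details the paper leaves implicit.
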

The antidistinguishing measurement of that subset is the optimum measurement in this case. The POVM operators $\{M_k\}_{k=1}^l$ makes the optimum measurement and other $(n-l)$ operators are null operators. 

Needless to say, if the subset is distinguishable, the whole set would be antidistinguishable. In this case, $l\leq d$, where $d$ is the dimension of the states.

\subsubsection*{Strong antidistinguishability}
    \emph{Strong antidistinguishability} denotes the task of antidistinguishability such that all outcomes of optimum measurement exhaust the set of states. In this case, none of the operators $\{M_k\}_k$ is a null operator.

From the definition itself, we can make the following observation.
\begin{obs}\label{o2}
    Strongly antidistinguishable set is always an antidistinguishable set but an antidistinguishable set may not be strongly antidistinguishable set.
\end{obs}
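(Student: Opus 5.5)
The statement has two parts. The first is an inclusion: every strongly antidistinguishable set is antidistinguishable. The second is that the inclusion is strict, which needs a witness set.

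For the inclusion, I will argue directly from the definitions. A set $\{\rho_k\}_{k=1}^n$ is antidistinguishable when the minimum in \eqref{pA} equals zero. In that case $\mathcal{A}[\{\rho_k\}_k]=1$, so there is a POVM $\{M_k\}_k$ with $\tr(\rho_k M_k)=0$ for every $k$. Strong antidistinguishability asks for such an optimum POVM with the additional requirement that no $M_k$ is the null operator. Any strongly antidistinguishing measurement therefore already satisfies $\sum_k \tr(\rho_k M_k)=0$. It is then an antidistinguishing measurement, and the first claim follows immediately.

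For strictness, I will use the set $\mathbbm{S}_2=\{\ket{0},\ket{1},\ket{+}\}$ from the example above. It is antidistinguishable via the POVM $\{\ket{1}\bra{1},\ket{0}\bra{0},\mathbf{0}\}$. What remains is to show that every optimum POVM for $\mathbbm{S}_2$ has a null element.

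The condition $\tr(\rho_k M_k)=0$ with $M_k\ge 0$ forces the support of $M_k$ to be orthogonal to the state. In a qubit, this forces $M_1=c_1\ket{1}\bra{1}$, $M_2=c_2\ket{0}\bra{0}$ and $M_3=c_3\ket{-}\bra{-}$ with $c_i\ge 0$. Completeness then gives $c_1\ket{1}\bra{1}+c_2\ket{0}\bra{0}+c_3\ket{-}\bra{-}=\I_2$. The off-diagonal entry of the left side in the computational basis is $-c_3/2$, so it vanishes only if $c_3=0$. This forces $M_3=\mathbf{0}$, and hence $\mathbbm{S}_2$ is not strongly antidistinguishable.

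The only step needing care is justifying that the zero-error condition pins each $M_k$ to a multiple of the single orthogonal projector. This holds because a positive operator with zero expectation on a pure state annihilates that state, and in dimension two the annihilator of a nonzero vector is one-dimensional. Once that is settled, the rest is the short matrix computation above.
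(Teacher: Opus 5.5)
Your proof is correct and follows the same route as the paper. The inclusion is immediate from the definitions, and strictness is witnessed by $\mathbbm{S}_2=\{\ket{0},\ket{1},\ket{+}\}$, where completeness $c_1\ket{1}\bra{1}+c_2\ket{0}\bra{0}+c_3\ket{-}\bra{-}=\I_2$ forces $c_3=0$; your off-diagonal computation and your support argument for each $M_k$ simply spell out the steps that the paper only asserts.
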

This last statement can be easily verified from the set $\mathbbm{S}_2$.
The set $\mathbbm{S}_2$ is not strongly antidistinguishable, though the set is antidistinguishable. But the set $\mathbbm{S}_1$ is strongly antidistinguishable. 
\begin{example}\label{e3}
    Let us take another set of states $\mathbbm{S}_3$ as follows:
    \[
\mathbbm{S}_3=\left\{\ket{0},\ket{1},\ket{+},\ket{-}\right\}.
\]
\end{example} 
From observation \ref{o1}, this set is obviously antidistinguishable. But can this set be strongly antidistinguishable? The answer is affirmative. The POVM elements of the optimum measurement are:
\[
M_1=\frac12\ket{1}\bra{1}, M_2=\frac12\ket{0}\bra{0}, M_3=\frac12\ket{-}\bra{-},
\]
\[
M_4=\frac12\ket{+}\bra{+}.
\]

For a set with three pairwise non-orthogonal states, antidistinguishability and strong antidistinguishability are the same condition.
The necessary and sufficient conditions \cite{Caves02} for antidistinguishability (strong antidistinguishability) of three pairwise non-orthogonal pure quantum states, i.e.,
\be \label{ADSpsi123}
\mathcal{A}[\{\ket{\psi_1},\ket{\psi_2},\ket{\psi_3}\}] = 1,
\ee 
are following:
\begin{subequations}\label{condAS}
\be
    x_1 + x_2 + x_3 < 1 
\ee
\be 
(x_1+x_2+x_3-1)^2 \geqslant 4x_1x_2x_3 ,
\ee 
\text{  where $x_1=|\la \psi_1|\psi_2\ra|^2, x_2=|\la \psi_1|\psi_3\ra|^2, x_3=|\la \psi_2|\psi_3\ra|^2$.}\\
\end{subequations}

Based on these conditions, \cite{barett} proves that a sufficient condition for the antidistinguishability of three states is that the pairwise modulus of their inner products is less than or equal to $1/2$.

Though there is no closed form to evaluate antidistinguishability of states, one can easily calculate it using semi-definite programming (SDP) which is described at \cite{Johnston2025tightbounds}. 

The same reference also states some necessary and sufficient (not simultaneously) conditions for the antidistinguishability and non-antidistinguishability of the given set depending on the modulus of pairwise inner products of the states.  One useful sufficient condition is as follows:
For any set of states $\{\ket{\psi_i}\}_{i=1}^n$, if
\be\label{scond}
|\la\psi_i|\psi_j\ra|\leqslant\frac{1}{\sqrt{2}}\sqrt{\frac{n-2}{n-1}},
\ee
for all $1\leqslant i\neq j\leqslant n$, then the set is antidistinguishable.

\subsubsection*{$2$-Antidistinguishability}
Similarly, we can define more than one elimination of quantum states. It denotes the task of  elimination of at least two states from a set at each outcome of the measurement.

Assume, we are given $n$ a priori known quantum states $\{\sigma_k\}_{k=1}^n$ sampled from equal probability distribution. $2$-antidistinguishability of this ensemble of the states is also a linear function which is defined as,
\bea\label{pA2}
 \mathcal{A}_2[\{\sigma_k\}_k] &=&\max_{\{F_a\}_a} \left\{\frac1n \sum_{k,a}\sum_{l> k} p(a\neq k, l|\sigma_k,\sigma_{l},F_a)\right\},\nonumber\\
\eea
where $a$ takes the all possible values of the tuple $(k,l)$ such that $l>k$. It can be easily seen that the cardinality of the set of all tuples $(k,l)$ is $^nC_2=\frac{n(n-1)}{2}$. We are taking the optimization over the measurement $\{F_a\}_a$. We know that $\sum_a (p(a\neq k,l|\sigma_k,\sigma_l,F)+p(a=k,l|\sigma_k,\sigma_l,F))=1$. Plugging this into the above equation,
it can be simplified as,
\bea
&&\mathcal{A}_2[\{\sigma_k\}_k]\nonumber\\
&=& 1 - \min_{\{F_a\}_a} \frac1n\left\{ \sum_{k=1}^n\sum_{l> k}\tr((\sigma_k+\sigma_{l})F_{k,l})\right\}
\eea
For better clarity, let us consider the following example. 

\begin{example}
    Let us consider a set of states in $\mathbbm{C}^4$ from \cite{webb}:
    \[
    \mathbbm{S}_4=\{\ket{x}\ket{x},\ket{x}\ket{y},\ket{y}\ket{x},\ket{y}\ket{y}\},
    \]
    where $\ket{x}=\cos\frac{5\pi}{24}\ket{0}+\sin\frac{5\pi}{24}\ket{1}$ and $\ket{y}=\cos\frac{5\pi}{24}\ket{0}-\sin\frac{5\pi}{24}\ket{1}$.
    
    This set is $2$-antidistinguishable. 
    Measurement operators are formed with following non-orthogonal un-normalized states.
\[   
\ket{\phi_{12}} = 
\frac{1}{\sqrt{2}}\big(\sqrt{\sqrt{2}-1}\ket{00} - \ket{10}\big),
\]
\[
\ket{\phi_{13}}=\frac{1}{\sqrt{2}}\big(\sqrt{\sqrt{2}-1}\ket{00} - \ket{01}\big),
\]
\[
\ket{\phi_{24}}=
\frac{1}{\sqrt{2}}\big(\sqrt{\sqrt{2}-1}\ket{00} + \ket{01}\big),
\]
\[
\ket{\phi_{34}}=
\frac{1}{\sqrt{2}}\big(\sqrt{\sqrt{2}-1}\ket{00} + \ket{10}\big),
\]
\[
\ket{\phi_{23}}=
\frac{1}{\sqrt{2}}\big((\sqrt{2}-1)\ket{00} + \ket{11}\big) ,
\]
\[
\ket{\phi_{14}} =
\frac{1}{\sqrt{2}}\big((\sqrt{2}-1)\ket{00} - \ket{11}\big).
\]\\
$\phi_{ij}$ eliminates $i'$th and $j'$th state. This construction is taken from \cite{webb}.
\end{example}
\begin{example}
Let us take another example of set $\mathbbm{S}_5$ in $\mathbbm{C}^3$:
\[
\mathbbm{S}_5=\big\{\ket{0},\ket{1},\ket{2},\frac{1}{\sqrt{3}}(\ket{0}+\ket{1}+\ket{2})\big\}.
\]
  This set is $2$-antidistinguishable by the measurement with POVM elements
  \[
  F_{12}=\ket{2}\bra{2}, F_{13}=\ket{1}\bra{1}, F_{23}=\ket{0}\bra{0}, 
  \]
  \[
  F_{14}=F_{24}=F_{34}=\mathbf{0}.
  \]
\end{example}
In the same analogy of $\mathbbm{S}_2$, we ask the same question here. Can we find a six-outcome measurement with no null element and execute the same task simultaneously? The answer is negative.

We can consider all the necessary elements which eliminates all the pairs from the set. As the states are in $\mathbbm{C}^3$, we always find a unique element which is orthogonal to two states. Considering all such elements, we can write the following equation:
\bea
&&g_1\ket{0}\bra{0}+g_2\ket{1}\bra{1}+g_3\ket{2}\bra{2}+\frac{g_4}{2}\ket{1-2}\bra{1-2}\nonumber\\
&&+\frac{g_5}{2}\ket{0-2}\bra{0-2}+\frac{g_6}{2}\ket{0-1}\bra{0-1}=\I_3,
\eea
where all $g_i\geq 0$ and $\sum_i g_i=3$. The only valid solution of the above equation is $g_4=g_5=g_6=0$ and $g_1=g_2=g_3=1$.

Before going to the next subsection, let us point out an obvious observation.
\begin{obs}\label{o3}
    $2$-antidistinguishable set is antidistinguishable set.
\end{obs}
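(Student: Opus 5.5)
Given a measurement $\{F_{k,l}\}_{k<l}$ that perfectly $2$-antidistinguishes $\{\sigma_k\}_{k=1}^n$, I will build an $n$-outcome measurement $\{M_k\}_{k=1}^n$ by coarse-graining. Perfect $2$-antidistinguishability means $\mathcal{A}_2[\{\sigma_k\}_k]=1$. Since each term $\tr((\sigma_k+\sigma_l)F_{k,l})$ is nonnegative, every one of them must vanish. Positivity of both $\sigma$'s and $F$ then gives $\tr(\sigma_k F_{k,l})=\tr(\sigma_l F_{k,l})=0$ for every pair $k<l$. In words, whenever outcome $(k,l)$ occurs, both $\sigma_k$ and $\sigma_l$ are ruled out. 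In particular $\sigma_k$ alone is ruled out, which is all that single-state exclusion needs.

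I will therefore assign each pair outcome to one of the two states it excludes, say the smaller index, and define
\[
M_k=\sum_{l>k}F_{k,l},\qquad k=1,\dots,n,
\]
with $M_n=\mathbf{0}$. Each $M_k$ is positive semidefinite, being a sum of positive operators. The family sums to $\I$, because every pair $(k,l)$ with $k<l$ is counted exactly once, under its smaller index. So $\{M_k\}_k$ is a valid POVM obtained by classical post-processing of $\{F_{k,l}\}$.

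Next I substitute into \eqref{pA}:
\[
\sum_k\tr(\sigma_kM_k)=\sum_k\sum_{l>k}\tr(\sigma_kF_{k,l})=0.
\]
This gives $\mathcal{A}[\{\sigma_k\}_k]=1$, so the set is antidistinguishable.

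The only delicate point is the step from the vanishing sum to the vanishing of each individual trace, which relies on positivity of the states and POVM elements. Otherwise the argument is immediate. As with Observation \ref{o1}, the resulting measurement need not be strong, since $M_n$ is null. So I will claim only plain antidistinguishability and not strong antidistinguishability.
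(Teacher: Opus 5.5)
Your proof is correct. It takes essentially the same approach as the paper, which simply asserts the observation is trivial from the definition: any outcome excluding both $\sigma_k$ and $\sigma_l$ in particular excludes $\sigma_k$, and your coarse-graining $M_k=\sum_{l>k}F_{k,l}$ together with the positivity step just makes that relabeling explicit.
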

This observation is trivial from the definition.

\subsubsection*{Strong $2$-antidistinguishability}
    The task is similar to strong antidistinguishability. In this case, the optimum measurement must exhaust all possible $2$-tuples. That means none of POVM operators $\{F_{k,l}\}_{k,l>k}$ is null operator.
\begin{obs}\label{o4}
     From the definition, we can infer strong $2$-antidistinguishable set is always $2$-antidistinguishable set but $2$-antidistinguishable set may not be strongly $2$-antidistinguishable set.
\end{obs}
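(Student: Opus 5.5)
The first half of Observation \ref{o4} follows directly from the definitions. Suppose a set $\{\sigma_k\}_{k=1}^n$ is strongly $2$-antidistinguishable. Then there is a measurement $\{F_{k,l}\}_{k<l}$, with every element non-null, for which the sum $\sum_{k<l}\tr((\sigma_k+\sigma_l)F_{k,l})$ vanishes, so that $\mathcal{A}_2[\{\sigma_k\}_k]=1$. Deleting the requirement that every element be non-null leaves precisely the definition of $2$-antidistinguishability. Hence the implication is immediate, in the same way as Observation \ref{o2}.

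For the second half, one explicit counterexample suffices, and the natural candidate is $\mathbbm{S}_5$ in $\mathbbm{C}^3$. It is already shown to be $2$-antidistinguishable by the measurement $F_{12}=\ket{2}\bra{2}$, $F_{13}=\ket{1}\bra{1}$, $F_{23}=\ket{0}\bra{0}$, with the remaining elements null. It therefore remains to show that no perfect measurement has all six elements non-null.

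\textbf{Support of each element.} Perfect elimination forces $\tr((\sigma_k+\sigma_l)F_{k,l})=0$ for every pair. Since $F_{k,l}\geq 0$, its support must be orthogonal to both $\sigma_k$ and $\sigma_l$. The two states in each pair are linearly independent vectors in $\mathbbm{C}^3$, so the orthogonal complement is one-dimensional. Consequently $F_{k,l}=g\,\ket{u_{kl}}\bra{u_{kl}}$ for a unique direction $u_{kl}$ and some $g\geq 0$. These directions are $\ket{2},\ket{1},\ket{0}$ for the pairs among the first three states, and $\ket{1}-\ket{2}$, $\ket{0}-\ket{2}$, $\ket{0}-\ket{1}$ for the pairs involving the fourth state.

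\textbf{Completeness forces the extra weights to vanish.} Completeness then reduces to the operator equation displayed before the observation. Its diagonal entries give $g_1+(g_5+g_6)/2=1$, and similarly for the other two diagonal positions. Its off-diagonal entries give $-g_6/2=0$, $-g_5/2=0$ and $-g_4/2=0$. So $g_4=g_5=g_6=0$, which means the three elements involving the fourth state must be null. Hence $\mathbbm{S}_5$ is not strongly $2$-antidistinguishable.

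The only step requiring any care is this rank-one argument, namely that each $F_{k,l}$ is forced onto a fixed line. Once that is in place, the coefficient comparison is routine.
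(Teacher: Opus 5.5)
Your proof is correct and follows the same route as the paper. The first part is immediate from the definition, and for the second you take $\mathbbm{S}_5$, confine each $F_{k,l}$ to the unique line orthogonal to its pair, and use the completeness equation to force $g_4=g_5=g_6=0$. You also supply two details the paper leaves implicit: positivity of $F_{k,l}$ forces its support orthogonal to both states, and the off-diagonal entries are what kill the extra weights.
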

The set $\mathbbm{S}_4$ is strongly $2$-antidistinguishable but the set $\mathbbm{S}_5$ is not strongly $2$-antidistinguishable, though it is $2$-antidistinguishable.

One can consider similar $x$-antidiscrimination which is elimination of $x$ number of states from a set at one go. If $x=n-1$, the task is nothing but the distinguishability of the set of states. In this work, we only consider antidistinguishability and $2$-antidistinguishability.

\section{Global and LOCC antidistinguishability}
Now, we present the general protocols for global and LOCC antidistinguishability of a set of quantum states. Let $\mathbbm{S}=\{\ket{\psi_i}\}_{i=1}^k=\{\otimes_{j=1}^n\ket{\psi^{(j)}_i}\}_{i=1}^k$ be a set of multipartite pure states on $\mathcal{H}=\bigotimes_{j=1}^n\mathcal{H}_j$ for $n\geqslant 2$. In principle, dimension of each $\mathcal{H}_i$ can be different.
\begin{definition}
    \textbf{Global antidistinguishability ($2$-antidistinguishability)}
    needs a protocol which gives the access to the whole set $\mathbbm{S}$ to a single party. The party chooses a suitable measurement to antidistinguish (2-antidistinguish) that set of states.
\end{definition}
    It is just the antidistinguishability ($2$-antidistinguishability) of a given set of states described in the previous section.
     Mathematically, we can define it as the same linear function defined at \eqref{pA} and \eqref{pA2} depending on the number (one or two) of the eliminated states. In the case, the extra condition is states are pure and multipartite.
    
    Similarly, we can define \textbf{strong global antidistinguishability ($2$-antidistinguishability)} as the previous section with only pure and multipartite states.  
\begin{definition}
        \textbf{LOCC antidistinguishability ($2$-antidistinguishability)} of a set of states starts with the $j'$th party performing a measurement on her subsystem and communicating the outcome to another party, where $j$ is arbitrary. Based on the received information, that party performs a measurement on his subsystem and communicates the result to the next party. This procedure continues until all parties have participated. At each run of the protocol, at least one (two) state(s) from the set must be eliminated. The last party may communicate back to any other party, and the same procedure can be continued. However, the parties have access to only a single copy of the unknown state. 
\end{definition}

LOCC antidistinguishability (LOCC $2$-antidistinguishability) starts with $j$'th party executing a POVM $\{M_j^\alpha\}_{\alpha}$, $\alpha=1,2,\cdots,r$ on the $j$'th subsystem. Each POVM satisfy $\sum_{\alpha=1}^r M_j^\alpha=\I$, for all $j$ and each element admits the kraus form $M_j^\alpha=(F_j^\alpha)^\dagger F_j^\alpha$. The probability of getting outcome $\alpha$ for the input state $\ket{\psi_i}$ is $p_\alpha =\bra{\psi_i}\I\otimes\cdots\otimes M^\alpha_j\otimes\cdots\otimes\I\ket{\psi_i}$ and the post-measurement state will be $\frac{1}{\sqrt{p_\alpha}}(\I\otimes\cdots\otimes F^\alpha_j\otimes\cdots\otimes\I)\ket{\psi_i}$. The measurement $\{M_j^\alpha\}$ needs to be antidistinguishability ($2$-antidistinguishability) preserving, that means the reduced set $\overline{\mathbbm{S}}=\{\frac{1}{\sqrt{p_\alpha}}(\I\otimes\cdots\otimes F^\alpha_i\otimes\cdots\otimes\I)\ket{\psi_i}\}_{i=1}^k$ is a set of antidistinguishable ($2$-antidistinguishable) states.

If the starter of the protocol can choose a measurement such that at every outcome, she eliminates at least one (two) state(s) from the ensemble, then the states are antidistinguishable ($2$-antidistinguishable) and this case does not need any communication.

    

\begin{definition}
    \textbf{Strong LOCC antidistinguishability ($2$-antidistinguishability)} of a set of states denotes the task of LOCC antidistinguishability ($2$-antidistinguishability) such that optimum  protocol exhaust the elimination of all the states (all the $2$-tuple). 
\end{definition}

First, we describe this protocol for antidistinguishability. Parties start with the set $\mathbbm{S}$. At each run of the protocol, they antidistinguish the set $\mathbbm{S}^1$, which obviously follows $\mathbbm{S}^1\subseteq\mathbbm{S}$. Suppose the protocol consists of $w$ number of runs. If each run necessitates the antidistinguishability of the subset $\{\mathbbm{S}^w\}_w$, strong LOCC antidistinguishability suggests that $\bigcup_w\mathbbm{S}^w=\mathbbm{S}$. Note that, the lowest cardinality of the set $\mathbbm{S}^w$ is  $1$ for all $w$. In other words, completing all the runs of the protocol, the parties need to antidistinguish the set $\mathbbm{S}$ such that each state is eliminated at least once at any run.

In the previous section, we wrote that in antidistinguishability, the optimum measurement may contain null element, where in strong antidistinguishability, it is not the case. For multipartite states, in local protocol, one can think as follows. Suppose, exhausting all $w$ runs, we see that $\bigcup_w\mathbbm{S}^w\subset\mathbbm{S}$. That means the set is not strongly LOCC antidistinguishable. We can think of the existence of null operators in any party's measurement and corresponds these null operators to the states which are not part of $\bigcup_w\mathbbm{S}^w$ but part of $\mathbbm{S}$. Those outcome will not occur. 

Similarly, we are going to explain strong $2$-antidistinguishability. We can define a set $\mathbbm{S}'$ which consists of all the pairs from the set $\mathbbm{S}$. At each run of the protocol, parties eliminate the subset $\mathbbm{S}^v$, where $v$ is the number of runs. Here, lowest cardinality of all the sets $\mathbbm{S}^v$ is $2$, for all $v$. There may be some $\mathbbm{S}^{v=v*}$ whose cardinality is greater than $2$. In this case, we define the set $\overline{\mathbbm{S}}^{v=v*}$ which consists all the pairs of $\mathbbm{S}^{v=v*}$. If the cardinality is $2$ for some $v=v'$, we can trivially say $\mathbbm{S}^{v=v'}=\overline{\mathbbm{S}}^{v=v'}$. Strong $2$-antidistinguishability suggests that $\bigcup_v\overline{\mathbbm{S}}^v=\mathbbm{S}'$. In other words, completing all the runs of the protocol, the parties need to $2$-antidistinguish the set $\mathbbm{S}$ such that each pair is eliminated at least once at any run.

One can check that all the observations \ref{o1}, \ref{o2}, \ref{o3} and \ref{o4} are true for these protocols too.

It is useful to introduce a simplified subclass of local protocol, namely \textit{one-way LOCC}. Here, the protocol involves a single run of local quantum operations on each subsystem, which is conditioned on prior classical communication.
\begin{lemma}\label{l5}
    For ensemble of multipartite product states, one-way LOCC is optimum for  antidistinguishability.
\end{lemma}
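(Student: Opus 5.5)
Our plan is to show that any (possibly multi-round) LOCC protocol that antidistinguishes a set of product states can be replaced by a one-way protocol in which each party measures exactly once, in a fixed order, each party conditioning only on the outcomes of earlier parties. The basic fact we use is that for a product state $\ket{\psi_i}=\otimes_j\ket{\psi_i^{(j)}}$, a local operation by party $j$ with Kraus operator $F_j^\alpha$ only rescales the state on the other subsystems. The post-measurement state is $\ket{\psi_i^{(1)}}\otimes\cdots\otimes F_j^\alpha\ket{\psi_i^{(j)}}\otimes\cdots$ (up to normalisation), and it remains a product state. Hence, throughout any LOCC run, the state held by each party $j$ depends only on the Kraus operators that party $j$ itself has applied, together with the index $i$ of the unknown state.

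We then unroll an arbitrary LOCC tree protocol. Fix a leaf (a complete classical transcript $t$) at which some state $\ket{\psi_{k(t)}}$ is excluded, i.e.\ the probability of reaching $t$ given $\ket{\psi_{k(t)}}$ vanishes. By the product structure this probability factorises,
\[
p(t|\psi_i)=\prod_j \bra{\psi_i^{(j)}} E_j^{t\dagger}E_j^{t}\ket{\psi_i^{(j)}},
\]
where $E_j^t$ is the ordered product of all Kraus operators party $j$ applied along $t$. Therefore $p(t|\psi_{k(t)})=0$ holds exactly when some party $j$ has $E_j^t\ket{\psi_{k(t)}^{(j)}}=0$. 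In words, exclusion is always certified by a single party's cumulative local operation, and that local operation depends on the other parties only through the classical transcript.

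To build the one-way simulation, we will let the first party (in any chosen order) perform the coarse-grained measurement of all her rounds at once: a POVM with elements $E_1^{t\dagger}E_1^t$ summed appropriately over the branches of the other parties. She then announces her full local record, and the next party does the same conditioned on that record, and so on. The step we expect to be the main obstacle is showing that this reordering is legitimate, because in the original protocol party $1$'s later choices depend on messages from others. The route we will try is to note that, since the states are product, the joint outcome distribution depends only on the local records. So we can let each party sample her local record for every possible transcript branch, or alternatively argue directly at the level of exclusion.

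The cleaner version of that direct argument is the one we plan to use. Whatever measurement a party makes, for each outcome the set of excluded states after the full protocol is determined by which local vectors are annihilated. A one-way protocol suffices if each party, given earlier outcomes, measures with a POVM whose every outcome either annihilates some surviving state's local factor or passes the problem on. We will make this precise by induction on the number of rounds, merging two consecutive rounds of the same party into one round via composition of Kraus operators. Commuting operations of different parties are allowed to be reordered because local operators on different tensor factors commute. Only the classical dependency needs care, and it is handled by letting the later party's choice be conditioned on the enlarged record. Zero-probability events are preserved under this reordering, so antidistinguishability carries over to the one-way protocol.
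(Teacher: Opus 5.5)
\textbf{There is a genuine gap.} Your factorization $p(t|\psi_i)=\prod_j\|E_j^t\ket{\psi_i^{(j)}}\|^2$ is correct, and so is the consequence that every exclusion at a leaf is certified by one party's cumulative operator. The gap is in the step that turns an arbitrary protocol into a one-way one.

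\emph{Reordering fails.} Local operators of different parties commute, but their rounds cannot be reordered. In an Alice--Bob--Alice protocol, Alice's second measurement is chosen as a function of Bob's outcome. If you move it in front of Bob's round, the information it depends on does not yet exist. Conditioning the \emph{later} party on an enlarged record handles the opposite dependency and does not help.

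\emph{Coarse-graining fails.} Summing $E_A^{t\dagger}E_A^t$ over Bob's branches overcounts: with $r$ Bob outcomes it gives $r\,\I$. State-independent weights $q_n$ do give a POVM. However, an element $\sum_n q_n F_m^\dagger C^{(m,n)}_k F_m$ annihilates Alice's local vector $\ket{a_i}$ only if every $C^{(m,n)}_k$ does. After outcomes $n$ on which Bob has already excluded a state, Alice's continuation is arbitrary. Equivalently, if Alice simulates Bob's outcome and Bob then measures for real, the new leaves pair Alice's operator from one path with Bob's operator from another. Neither factor need annihilate anything.

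\emph{Other options.} Measuring ``for every possible branch'' is impossible on a single copy. So the claim that zero-probability events survive the reordering is unsupported.

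\emph{The missing idea.} The paper's proof invokes it in the phrase ``Bob can not antidistinguish, so no useful information''. On a branch where nothing has been excluded yet, a party's outcome is irrelevant for exclusion. For product states it leaves every other party's local vectors unchanged, so Alice may commit to the continuation of \emph{one} such failing branch.

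\emph{A rigorous version.} For arbitrarily many rounds, induct on the depth of the protocol tree. First refine every outcome to a single Kraus operator; this only splits leaves and keeps zero probabilities. Then prove the stronger claim that some single party can antidistinguish its own local set.
\begin{itemize}
\item Suppose party $j$ opens with Kraus operators $\{F_m\}$. On a branch $m$ where no $F_m\ket{\psi_i^{(j)}}$ vanishes, all $N$ states pass to a shallower protocol, with only party $j$'s factors changed.
\item By induction, some party can antidistinguish its current local set on that branch.
\item If on some such branch this is a party $l\neq j$, then $l$'s local set is the original one, and $l$ can act alone.
\item Otherwise, let $\{G^{(m)}_k\}$ antidistinguish $\{F_m\ket{\psi_i^{(j)}}\}$ on each such branch. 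Party $j$'s POVM consisting of the excluding $F_m^\dagger F_m$ together with the elements $F_m^\dagger G^{(m)}_k F_m$ antidistinguishes $\{\ket{\psi_i^{(j)}}\}$.
\end{itemize}
This yields zero-communication protocols, which is the content of Theorem~\ref{th2}, and in particular one-way optimality.
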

\begin{proof}
We prove the lemma for the bipartite case; the multipartite generalization is immediate. Suppose Alice initiates the protocol by performing a measurement \(\mathcal{M}\). For some outcomes of \(\mathcal{M}\), she is unable to antidistinguish the given set of states. In those branches, she communicates the measurement outcome to Bob, who then performs a local measurement on his subsystem. Since the ensemble consists entirely of product states, Alice's measurement does not alter Bob's reduced states, and Bob's measurement does not alter Alice's reduced states. Now suppose there are branches where Bob is also unable to antidistinguish the states. He communicates his outcome back to Alice, who then performs another local measurement \(\mathcal{N}\) and successfully antidistinguishes the remaining states. Observe that Bob's measurement does not modify Alice's local states. Therefore, when Alice performs \(\mathcal{N}\), her measurement depends only on her previous measurement outcome and Bob's classical message, not on any change in her quantum system. Consequently, Alice could instead postpone the classical interaction (as Bob can not antidistinguish, so no useful information) and perform \(\mathcal{N}\) immediately after \(\mathcal{M}\) on the corresponding branches. Thus, the sequence of Alice's measurements can be regarded as a single adaptive local measurement, which is operationally equivalent to a single POVM. Hence, whenever a branch of a two-way protocol returns to Alice, all of Alice's actions can be merged into her initial stage of the protocol, eliminating the need for Bob to communicate back to her. Repeating the same argument for every return of classical communication removes all backward communication, reducing any finite multi-round LOCC antidistinguishing protocol to a one-way protocol.  
\end{proof}
But for $2$-antidistinguishability, this logic can not be continued similarly. It can be easily checked multi-way protocol, in this case, can not be concatenated as before. Suppose Alice can not eliminate any states at one run. So no useful information goes to Bob. Bob eliminate one state at any run and communicate back to Alice. Now Alice can exclude one state from the $(n-1)$ number of states. Therefore, in general, the above lemma may not be true for $2$-antidistinguishability.
\section{results on LOCC protocols}\label{results}

In this section, we exhibit different results regarding LOCC antidistinguishability and LOCC $2$-antidistinguishability
First, we comment on necessary and sufficient conditions for which three bipartite states become locally antidistinguishable.

Alice and Bob know the precise form of the states they shared. Such three states can be written as following:
\bea\label{gen_st}
\ket{\psi}&=&a_0\ket{0}_A\ket{\alpha_0}_B+\cdots+a_{d-1}\ket{d-1}_A\ket{\alpha_{d-1}}_B\nonumber\\
\ket{\phi}&=&b_0\ket{0}_A\ket{\beta_0}_B+\cdots+b_{d-1}\ket{d-1}_A\ket{\beta_{d-1}}_B\nonumber\\
\ket{\zeta}&=&c_0\ket{0}_A\ket{\gamma_0}_B+\cdots+c_{d-1}\ket{d-1}_A\ket{\gamma_{d-1}}_B,
\eea
where $\{\ket{0},\cdots,\ket{d-1}\}$ forms an orthonormal basis and the set of states are antidistinguishable but not pairwise orthogonal.  $a_i,b_i,c_i\geq 0$. By normalization, $\sum_i |a_i|^2=\sum_i |b_i|^2=\sum_i |c_i|^2 =1$. As the states are antidistinguishable, we can write according to \eqref{condAS}, 
\bea\label{newcon}
&& y_1+y_2+y_3 < 1,\\
&& (y_1+y_2+y_3-1)^2\geqslant 4y_1y_2y_3,
\eea
where $y_1=|\la\psi|\phi\ra|^2=|\sum_{i=0}^{d-1}a_i^*b_i\la \alpha_i|\beta_i\ra|^2, y_2=|\la\psi|\zeta\ra|^2=|\sum_{i=0}^{d-1}a_i^*c_i\la \alpha_i|\gamma_i\ra|^2$ and $y_3=|\la\zeta|\phi\ra|^2=|\sum_{i=0}^{d-1}c_i^*b_i\la \gamma_i|\beta_i\ra|^2$.
\begin{thm}
    If Alice starts the protocol, the sufficient condition for the states in \eqref{gen_st} to be locally antidistinguishable is that the set of states $\{\ket{\alpha_i},\ket{\beta_i},\ket{\gamma_i}\}_{i=0}^{d-1}$ are antidistinguishable. This condition, however, is not necessary. 
\end{thm}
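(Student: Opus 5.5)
The plan is to read the condition as follows: for every index $i$, the triple $\{\ket{\alpha_i},\ket{\beta_i},\ket{\gamma_i}\}$ of Bob's conditional states is antidistinguishable. Under that hypothesis I will give an explicit one-way protocol in which Alice starts. I will then refute necessity with a product-state counterexample. By Lemma~\ref{l5}, one-way protocols are the natural class to use, although for sufficiency any explicit protocol is enough.

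\emph{Sufficiency.} Alice measures in the orthonormal basis $\{\ket{0},\dots,\ket{d-1}\}$ that appears in \eqref{gen_st}. Suppose she obtains outcome $i$.
\begin{itemize}
\item If $a_i,b_i,c_i$ are all nonzero, Bob's post-measurement states are exactly $\ket{\alpha_i},\ket{\beta_i},\ket{\gamma_i}$. The relative phases are absorbed because the coefficients are nonnegative, and in any case a global phase is irrelevant.
\item If some coefficient vanishes, say $a_i=0$, then outcome $i$ has zero probability under $\ket{\psi}$. In that branch $\ket{\psi}$ is already excluded, and Bob can output ``not $\psi$'' with certainty.
\end{itemize}
Alice sends $i$ to Bob. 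In the first case, Bob applies the antidistinguishing POVM for $\{\ket{\alpha_i},\ket{\beta_i},\ket{\gamma_i}\}$, which exists by hypothesis. Its outcome names a state that had zero probability of producing the joint outcome $(i,\cdot)$, so that state is excluded with certainty. Every branch therefore eliminates at least one state, and the protocol is an LOCC antidistinguishing protocol started by Alice. The only point needing care is the bookkeeping of vanishing coefficients. It is handled by Observation~\ref{o1}: if the surviving subset of Bob's states is antidistinguishable, or already has one state missing, the task in that branch succeeds.

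\emph{Non-necessity.} I will choose product states for which Alice alone can antidistinguish, while Bob's conditional triples are trivially non-antidistinguishable. Take
\[
\ket{\psi}=\ket{0}\ket{0},\qquad \ket{\phi}=\ket{v_+}\ket{0},\qquad \ket{\zeta}=\ket{v_-}\ket{0},
\]
with the states of $\mathbbm{S}_1$. Written in the form \eqref{gen_st} with Alice's computational basis, these give $\ket{\alpha_i}=\ket{\beta_i}=\ket{\gamma_i}=\ket{0}$ for every $i$ with a nonzero coefficient. Three identical states are never antidistinguishable, since each POVM element would have to annihilate $\ket{0}$. Hence the condition fails.

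Nevertheless, Alice can perform the antidistinguishing measurement of $\mathbbm{S}_1$ on her subsystem and exclude one state in every outcome, without any communication. The states are pairwise non-orthogonal, with squared overlaps $1/4$, so the example lies within the stated setting. I expect the main obstacle to be only definitional: the condition must be stated relative to Alice's chosen basis, and it must be made clear that failure in this basis does not preclude other local strategies. The counterexample shows exactly this.
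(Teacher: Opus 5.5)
Your proof is correct. The sufficiency half is the paper's argument: Alice measures in the basis of \eqref{gen_st}, sends $i$, and Bob antidistinguishes the $i$-th triple. You handle the zero-coefficient branches more carefully than the paper does. That case follows directly from your zero-probability remark, so Observation~\ref{o1} is not what does the work there.

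You differ from the paper in the counterexample to necessity. The paper uses $\ket{0}\ket{+}$, $\ket{+}\ket{0}$ and $\frac{1}{\sqrt2}(\ket{01}+\ket{10})$. There neither party can antidistinguish alone, because each party's reduced states include $\I_2/2$, so Alice's outcome genuinely has to be communicated. However, the paper locates the failure of the condition only at the branch $i=1$. There $a_1=0$ and the offending set is $\{\ket{\beta_1},\ket{\gamma_1}\}=\{\ket{0},\ket{0}\}$, while the $i=0$ triple $\{\ket{+},\ket{0},\ket{1}\}$ is antidistinguishable. Under your bookkeeping, where a vanishing coefficient means that state is already excluded, the paper's example satisfies the condition. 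Its protocol is then just the sufficiency protocol. So it refutes necessity only under the stricter reading that also constrains zero-coefficient branches.

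Your example fails the condition at $i=0$, where all three coefficients are nonzero and the triple is $\{\ket{0},\ket{0},\ket{0}\}$. It therefore works under either reading. It does not even depend on Alice's basis. Bob's factor is always $\ket{0}$, so the condition could hold only in a basis whose every vector lies in $\{\ket{1},\ket{v_+^\perp},\ket{v_-^\perp}\}$. No two of these are orthogonal, so no such basis exists.

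The price is triviality. In your example Alice antidistinguishes alone, which is exactly the ``straight-forward'' case the paper sets aside. Your example shows that the basis-based condition ignores other local strategies. It does not show that a communication-assisted protocol can succeed where the condition fails. A minor point: with the paper's convention $c_i\geqslant 0$ you get $\ket{\gamma_1}=-\ket{0}$, which is harmless.
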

\begin{proof}
    Alice starts the protocol by measuring her subsystem in $\{\ket{0},\cdots,\ket{d-1}\}$ basis and relay the outcome to Bob. Then Bob does the suitable measurement to antidistinguish any of the sets among $\{\ket{\alpha_i},\ket{\beta_i},\ket{\gamma_i}\}_{i=0}^{d-1}$. So condition of sufficiency is proved.

    For the necessary condition, let us take that $\{\ket{\alpha_i},\ket{\beta_i},\ket{\gamma_i}\}_i$ is not antidistinguishable. We have to consider all the possible antidistinguishability preserving measurements Alice can perform. Alice can eliminate a state and this is straight-forward antidistinguishability. For non-trivial cases, we have to take the situation where Alice cannot eliminate any state. Suppose Alice chooses a measurement whose POVM operator is diagonalised in the basis $\{\ket{0},\cdots,\ket{d-1}\}$. So this POVM operator can be written as,
    \bea\label{mea}
     M^\dagger_m M_m &=& diag(f_0,f_1,\cdots,f_{d-1}),
    \eea

with $f_0,f_1,\cdots,f_{d-1}\geqslant 0$. \\
After implementing this measurement, the post-measurement states should be antidistinguishable. Now consider the states $\ket{\psi}$ and $\ket{\phi}$. After the measurement, the pairwise inner product will be,
\bea
&&|\la\psi| M^\dagger_m M_m\otimes\I|\phi\ra|^2\nonumber\\
&=& |\sum_{i=0}^{d-1} f_i a_i^*b_i\la \alpha_i|\beta_i\ra|^2
\eea
The sufficient condition to make the post-measurement states antidistinguishable is $|\sum_i f_i a_i^*b_i\la \alpha_i|\beta_i\ra|^2\leqslant |\sum_i a_i^*b_i\la \alpha_i|\beta_i\ra|^2$. This sufficiency comes from \eqref{newcon}. This sufficient condition does not contradict the non-antidistinguishability of $\{\ket{\alpha_i},\ket{\beta_i},\ket{\gamma_i}\}_i$. We illustrate this with the following example.\\
Consider the following three states written in the form \eqref{gen_st}:
\bea\label{necessary}
\ket{\psi_1} &=& \ket{0}\ket{+}\nonumber\\
\ket{\psi_2} &=& \frac{1}{\sqrt{2}}(\ket{0}\ket{0}+\ket{1}\ket{0})\nonumber\\
\ket{\psi_3} &=& \frac{1}{\sqrt{2}}(\ket{01}+\ket{10})
\eea
Comparing with \eqref{gen_st}, we can write $\ket{\alpha_0}=\ket{+},\ket{\beta_0}=\ket{\beta_1}=\ket{\gamma_1}=\ket{0}$ and $\ket{\gamma_0}=\ket{1}$. Note that $\{\ket{\beta_1},\ket{\gamma_1}\}$ are not antidistinguishable but these three states are locally antidistinguishable. 
 Suppose Alice starts the protocol by measuring in $\{\ket{0},\ket{1}\}$ basis. If the first outcome occurs, Bob will choose the same measurement as Alice and antidistinguish the states. In this case Bob eliminates  either $\ket{\psi_3}$ or  $\ket{\psi_2}$. If the second outcome clicks at Alice's side, Alice eliminates $\ket{\psi_1}$. Therefore, at each run, they eliminate at least one state. We have already shown that $\{\ket{\beta_1},\ket{\gamma_1}\}$ are not antidistinguishable. 
\end{proof}
The statement of the last theorem is applicable for strong antidistinguishability too. 

\begin{lemma}
    If Alice starts the protocol, the states in \eqref{gen_st} can be LOCC strongly antidistinguishable even if the states $\{\ket{\alpha_i},\ket{\beta_i},\ket{\gamma_i}\}_{i=0}^{d-1}$ are not antidistinguishable.
\end{lemma}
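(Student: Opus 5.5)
The plan is to prove the statement by an explicit example, and the natural candidate is the set in \eqref{necessary} that was already used in the proof of the preceding theorem. There we verified only that the protocol eliminates at least one state in every run. For strong antidistinguishability we must check one thing more: over all runs of the protocol, each of $\ket{\psi_1},\ket{\psi_2},\ket{\psi_3}$ is eliminated at least once. In the notation introduced earlier, the union $\bigcup_w\mathbbm{S}^w$ of the eliminated states must equal the whole set. The hypothesis of the lemma also holds for this example, since in the $\ket{1}_A$ branch the relevant Bob states $\{\ket{\beta_1},\ket{\gamma_1}\}=\{\ket{0},\ket{0}\}$ are not antidistinguishable.

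The protocol is as follows. Alice measures in $\{\ket{0},\ket{1}\}$ and sends Bob her outcome.
\begin{enumerate}[label=(\roman*)]
\item If Alice obtains $\ket{1}$, then $\ket{\psi_1}=\ket{0}\ket{+}$ has zero amplitude on $\ket{1}_A$, so this run eliminates $\ket{\psi_1}$. Bob need not act.
\item If Alice obtains $\ket{0}$, the post-measurement states are $\ket{0}\ket{+}$, $\ket{0}\ket{0}$ and $\ket{0}\ket{1}$ (after normalisation). Bob then measures in $\{\ket{0},\ket{1}\}$.
\begin{itemize}
\item Bob's outcome $\ket{0}$ has zero overlap with $\ket{1}$, so it eliminates $\ket{\psi_3}$.
\item Bob's outcome $\ket{1}$ has zero overlap with $\ket{0}$, so it eliminates $\ket{\psi_2}$.
\end{itemize}
\end{enumerate}

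Every branch therefore eliminates a state, and the three branches together eliminate $\ket{\psi_1}$, $\ket{\psi_2}$ and $\ket{\psi_3}$. This is exactly strong LOCC antidistinguishability, with Alice starting.

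The main point to verify is that every branch used for the union is actually realised, rather than being a null branch of the kind discussed after the definition of strong LOCC antidistinguishability. This comes down to checking that each branch occurs with nonzero probability for at least one of the states:
\begin{itemize}
\item Alice's outcome $\ket{1}$ occurs for $\ket{\psi_2}$ and $\ket{\psi_3}$, each with probability $1/2$.
\item Bob's outcome $\ket{0}$ after Alice's $\ket{0}$ occurs for $\ket{\psi_1}$ and $\ket{\psi_2}$.
\item Bob's outcome $\ket{1}$ after Alice's $\ket{0}$ occurs for $\ket{\psi_1}$ and $\ket{\psi_3}$.
\end{itemize}
All three eliminations are therefore genuine. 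This completes the example and proves the lemma.
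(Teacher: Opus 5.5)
Your proposal is correct and follows the paper's own proof. You use the same example \eqref{necessary} and the same protocol: Alice measures in $\{\ket{0},\ket{1}\}$, and on her $\ket{0}$ branch Bob also measures in $\{\ket{0},\ket{1}\}$, so that the three branches eliminate $\ket{\psi_1}$, $\ket{\psi_3}$ and $\ket{\psi_2}$. Your explicit check that each branch occurs with nonzero probability is a detail the paper leaves implicit in its one-line "exhaustively antidistinguished" remark.
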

\begin{proof}
    To prove this, we use the same set of states at \eqref{necessary}.
    Alice and Bob executes the same protocol as above. Considering all the outcomes of Alice and Bob, the set of states is exhaustively antidistinguished and hence are strongly antidistinguishable under LOCC. We know that $\{\ket{\beta_1},\ket{\gamma_1}\}$ are not antidistinguishable.
\end{proof}
One can check that the states in \eqref{necessary} are LOCC antidistinguishable irrespective of the commencing party. 
 If Bob starts the protocol, Bob executes the measurement in computational basis $\{\ket{0},\ket{1}\}$. If first outcome occurs, upon communication, Alice measures in the same basis and eliminate either $\ket{\psi_1}$ or $\ket{\psi_3}$. If Bob gets second outcome, he eliminates $\ket{\psi_2}$ straight-forwardly.  
 
This raises a substantial question of whether there exists a set of states whose LOCC antidistinguishability depends on the choice of the initiating party. Our answer is negative for any set of multipartite product states.

\begin{thm}\label{th2}
    If a set of multipartite product states is  LOCC antidistinguishable, the set is trivially antidistinguishable without prior communication.
\end{thm}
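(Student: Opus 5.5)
The plan is to reduce the statement to a simple zero-product argument. Elimination of a product state by a local protocol means that the probability of the run factorizes into a product of local probabilities, and this product must vanish. I would first invoke Lemma~\ref{l5}, which says one-way LOCC is optimal for antidistinguishing multipartite product states. So it suffices to treat a one-way protocol. I do the bipartite case first: Alice performs $\{M^a\}_a$, announces $a$, and Bob performs $\{N^{a}_b\}_b$. Write the states as $\ket{\alpha_i}\otimes\ket{\beta_i}$. Since the states are product and Bob's system is untouched by Alice, state $i$ is eliminated on the branch $(a,b)$ iff
\be
\bra{\alpha_i}M^a\ket{\alpha_i}\,\bra{\beta_i}N^a_b\ket{\beta_i}=0 .
\ee
LOCC antidistinguishability means every branch $(a,b)$ has at least one $i$ satisfying this, i.e.\ at least one of the two factors vanishes.

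The key step is a dichotomy on Bob's reduced set $\{\ket{\beta_i}\}_i$. If it is antidistinguishable, Bob alone antidistinguishes the whole set, because annihilating $\ket{\beta_i}$ kills the product state; no communication is needed. Otherwise, for \emph{every} Alice outcome $a$ the POVM $\{N^a_b\}_b$ fails to antidistinguish $\{\ket{\beta_i}\}_i$. Hence there is some outcome $b^*(a)$ with $\bra{\beta_i}N^a_{b^*}\ket{\beta_i}>0$ for all $i$. On the branch $(a,b^*(a))$ the zero must therefore come from Alice's factor: some $i$ satisfies $\bra{\alpha_i}M^a\ket{\alpha_i}=0$. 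Since this holds for every $a$, Alice's measurement $\{M^a\}_a$ already antidistinguishes the set by itself. In both cases a single party succeeds without prior communication. The argument is symmetric in the parties, so it also shows that LOCC antidistinguishability does not depend on who starts.

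For the multipartite case I would argue by induction along the one-way chain of parties $1,2,\dots,n$. Treat the last party as ``Bob'' and the first $n-1$ parties, together with their adaptive one-way measurements, as a single effective ``Alice''. The effective Alice's branch probabilities are products of local factors, and the elimination condition is still that a product vanishes. Repeating the dichotomy, either party $n$ alone antidistinguishes, or each branch of the first $n-1$ parties already eliminates a state. The induction hypothesis then reduces the latter to a single party acting alone. More directly, one can observe that on any branch some party's factor is zero for some $i$, and pick the ``worst'' outcome for each later party to push the zero back to an earlier party.

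The main obstacle I anticipate is making the step ``Bob's set is not antidistinguishable $\Rightarrow$ there is an outcome $b^*$ with all $\bra{\beta_i}N_b\ket{\beta_i}>0$'' rigorous. This requires the definition of antidistinguishability used here to be exactly perfect exclusion, namely every outcome having zero overlap with some state. This is the contrapositive of the definition in \eqref{pA} when $\mathcal A=1$. A second point needing care is that Lemma~\ref{l5}'s merging of backward communication is legitimately applied before the dichotomy.
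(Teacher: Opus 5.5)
Your proposal is correct and takes essentially the same route as the paper. Both proofs reduce to one-way LOCC via Lemma~\ref{l5}, then use that a product state's branch probability factorizes, so that along an inconclusive branch of one party the vanishing factor must come from a single party acting on its unchanged local states. The only difference is direction: the paper follows an inconclusive outcome of the first party forward until some party must antidistinguish alone, while you fix an inconclusive outcome of the last party and push the zero back to earlier parties (your contrapositive dichotomy plus induction), which is the same argument made somewhat more explicit.
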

\begin{proof}
We prove this for one-way protocol as, from Lemma \ref{l5}, we know that one-way LOCC protocol is optimum for antidistinguishability.
Suppose Alice starts the protocol. To eliminate certain state, the corresponding POVM operator needs to be in the orthogonal subspace of that of the state. Suppose, the states at Alice's disposal are $\{\ket{\chi_i}_{i=1}^N\}$. If Alice's measurement $\{E_i\}_{i=1}^N$ is such that $\tr(E_i\chi_i)=0$ for all $i$, then the set of states are antidistinguishable without communication.

Suppose, any outcome of Alice, i.e., $i=i*$ is unable to eliminate any of the states, Alice communicates the corresponding information to next person.  In this case, Bob has to antidistinguish $N$ states, or he may communicate sufficiently so that the next party can complete the task. 

 If all parties except the last one have at least one POVM element that fails to eliminate any state, then there exists a branch of the protocol in which none of the first $m-1$ parties eliminates any state. Along this branch, the last party receives all $N$ states. Since perfect antidistinguishability must succeed on every branch, the last party must antidistinguish all $N$ states by a local measurement alone.
Therefore, the success of the protocol does not depend on any prior communication before this party acts. 
\end{proof}
\begin{lemma}
    LOCC antidistinguishability of a set of multipartite product states does not depend on the starter of the protocol.
\end{lemma}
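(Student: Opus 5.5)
The plan is to derive the lemma as a corollary of Theorem \ref{th2} together with Lemma \ref{l5}. We reduce any LOCC antidistinguishing protocol, whoever starts it, to a single local measurement performed by \emph{one} party with no communication at all. A protocol with no communication can be launched by any party, so the identity of the starter then becomes irrelevant.

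Concretely, let $\mathbbm{S}=\{\otimes_{j=1}^n\ket{\psi_i^{(j)}}\}_{i=1}^k$ be LOCC antidistinguishable by some protocol started by party $j$. First, by Lemma \ref{l5} we replace the protocol by a one-way protocol with some ordering of the parties. Then we invoke Theorem \ref{th2}. Either some party along the chain has a local measurement each of whose outcomes eliminates at least one state, or there is a branch in which no earlier party eliminates anything. In the second case the last party on that branch must antidistinguish all $k$ states alone.

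Two facts need to be made explicit, and I would state them in this order.

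\textbf{Step 1 (the successful party acts alone).} Because the states are product, earlier parties' measurements do not change the local states $\{\ket{\psi_i^{(l)}}\}_i$ of party $l$ (up to normalisation of the branch weights, which is irrelevant for zero-probability conditions). So the successful party $l$ antidistinguishes its own local set $\{\ket{\psi_i^{(l)}}\}_{i=1}^k$ by some POVM $\{E_i\}$ with $\tr(E_i\psi_i^{(l)})=0$, and this POVM can be applied directly without waiting for anyone.

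\textbf{Step 2 (any party can start).} Given any other prospective starter $j'$, we design a protocol in which $j'$ performs the trivial measurement $\{\I\}$, or any measurement at all, and communicates to $l$, who then applies $\{E_i\}$. Since $\{E_i\}$ eliminates $\ket{\psi_i}$ on every outcome $i$ regardless of prior actions, this protocol is antidistinguishing. Hence LOCC antidistinguishability with starter $j$ implies it with every starter, and by symmetry the converse holds.

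The main obstacle is making Step 1 rigorous in the branch analysis of Theorem \ref{th2}. The theorem as stated only covers the extreme branch where no prior party eliminates anything. One must check that this branch always exists unless some party already succeeds outright. That is, if a party's measurement eliminates a state on \emph{every} outcome, that party alone antidistinguishes. Otherwise that party has an outcome eliminating nothing, and we follow that outcome onward. Following such ``useless'' outcomes party by party yields a chain ending at a party who must antidistinguish the full set locally. I would formalise this as a short induction on the number of parties in the one-way chain.
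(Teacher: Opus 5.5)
Your proposal is correct and is essentially the paper's argument: the lemma follows immediately from Theorem~\ref{th2}, with Lemma~\ref{l5} used to reduce to one-way protocols. Some single party can antidistinguish its own (unchanged) local states, so whoever starts can simply pass the protocol to that party. Your explicit chain of non-eliminating outcomes tightens the branch argument in the proof of Theorem~\ref{th2}, which only spells out the case where every non-final party has a useless outcome, but it does not change the route.
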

\begin{proof}
This is the direct
Consequence of the previous Theorem. The protocol can be initiated by any party, as the antidistinguishing task will be completed once the protocol reaches the party whose local states are themselves antidistinguishable.    
\end{proof}

This is a departure from the case of LOCC distinguishability of product states, where asymmetry of the result exists depending on the starter of the protocol\cite{hardy}.  

Now we extend this question in LOCC $2$-antidistinguishability task. Consider the following product states acting on $\mathbbm{C}^2\otimes\mathbbm{C}^2$.
\bea\label{x_1}
\ket{\delta_1}&=&\ket{0}\ket{0},
\ket{\delta_2}=\ket{+}\ket{1},\nonumber\\
\ket{\delta_3}&=&\ket{v_+}\ket{+},
\ket{\delta_4}=\ket{v_-}\ket{-}.
\eea

\begin{thm}\label{th3}
     one-way LOCC $2$-antidistinguishability of multipartite states depends on the starter.\\
    The states defined in \eqref{x_1} are LOCC $2$-antidistinguishable if Alice starts the protocol. 
\end{thm}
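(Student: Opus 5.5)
The plan is to give an explicit one-way protocol started by Alice. Its correctness rests on the fact that the states are product states: a local measurement by one party leaves the other party's reduced states unchanged. The reduced states are $\{\ket{0},\ket{+},\ket{v_+},\ket{v_-}\}$ on Alice's side and $\{\ket{0},\ket{1},\ket{+},\ket{-}\}$ on Bob's side.

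For Alice, I would take the trine measurement that antidistinguishes $\mathbbm{S}_1$:
\[
E_1=\tfrac23\ket{1}\bra{1},\quad E_2=\tfrac23\ket{v_+^\perp}\bra{v_+^\perp},\quad E_3=\tfrac23\ket{v_-^\perp}\bra{v_-^\perp}.
\]
Outcome $E_1$ eliminates $\ket{\delta_1}$, $E_2$ eliminates $\ket{\delta_3}$, and $E_3$ eliminates $\ket{\delta_4}$. The state $\ket{\delta_2}$ may survive every outcome, which does no harm. In each branch Bob must then eliminate one further state from his reduced states:
\begin{itemize}
\item After $E_1$, Bob holds $\{\ket{1},\ket{+},\ket{-}\}$ for $\ket{\delta_2},\ket{\delta_3},\ket{\delta_4}$. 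He measures in the $\{\ket{+},\ket{-}\}$ basis: outcome $+$ eliminates $\ket{\delta_4}$ and outcome $-$ eliminates $\ket{\delta_3}$.
\item After $E_2$, Bob holds $\{\ket{0},\ket{1},\ket{-}\}$ for $\ket{\delta_1},\ket{\delta_2},\ket{\delta_4}$. He measures in the computational basis: outcome $0$ eliminates $\ket{\delta_2}$ and outcome $1$ eliminates $\ket{\delta_1}$.
\item After $E_3$, Bob holds $\{\ket{0},\ket{1},\ket{+}\}$ for $\ket{\delta_1},\ket{\delta_2},\ket{\delta_3}$. The computational basis again works.
\end{itemize}
So every branch eliminates two states, and Alice-first one-way LOCC $2$-antidistinguishes \eqref{x_1}.

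To establish dependence on the starter, I would show that no one-way protocol started by Bob succeeds. This is the main obstacle. The argument proceeds in three steps.

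\emph{Step 1.} Consider an outcome $F$ of Bob that eliminates no state. Then Alice alone must $2$-antidistinguish her four distinct qubit states. Each of her nonzero POVM elements would have to be orthogonal to two non-parallel qubit vectors, which forces it to vanish, so this is impossible. Hence every nonzero element of Bob's POVM must be orthogonal to at least one of $\ket{0},\ket{1},\ket{+},\ket{-}$. In a qubit this means it is proportional to one of $\ket{1}\bra{1}$, $\ket{0}\bra{0}$, $\ket{-}\bra{-}$ or $\ket{+}\bra{+}$, and each of these eliminates exactly one state.

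\emph{Step 2.} Check, for each such element, whether the triple left on Alice's side is antidistinguishable, using the conditions \eqref{condAS}.
\begin{itemize}
\item Eliminating $\ket{\delta_1}$ leaves $\{\ket{+},\ket{v_+},\ket{v_-}\}$. Since $|\la +|v_+\ra|^2=(2+\sqrt3)/4$, the sum of squared overlaps exceeds $1$, so this triple fails.
\item Eliminating $\ket{\delta_4}$ leaves $\{\ket{0},\ket{+},\ket{v_+}\}$, which fails for the same reason.
\item Eliminating $\ket{\delta_2}$ leaves the trine $\mathbbm{S}_1$, which is antidistinguishable.
\item Eliminating $\ket{\delta_3}$ leaves $\{\ket{0},\ket{+},\ket{v_-}\}$. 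This case sits essentially at the boundary of the second condition of \eqref{condAS} and needs a careful exact calculation, but it is not needed for the conclusion.
\end{itemize}
Bob must therefore avoid the elements $\ket{1}\bra{1}$ and $\ket{-}\bra{-}$ (those eliminating $\ket{\delta_1}$ and $\ket{\delta_4}$).

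\emph{Step 3.} The remaining elements are only nonnegative combinations of $\ket{0}\bra{0}$ and $\ket{+}\bra{+}$. These cannot sum to $\I_2$: on the span of two non-orthogonal projectors, the identity forces zero coefficients. So Bob cannot start, and the asymmetry follows.
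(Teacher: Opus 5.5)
Your proposal is correct and follows essentially the paper's route. For Alice-first you give an explicit one-way protocol; your trine in place of the paper's four-outcome POVM is an inessential variation and eliminates the same four pairs. For Bob-first you reduce his POVM to rank-one elements proportional to $\ket{0}\bra{0},\ket{1}\bra{1},\ket{+}\bra{+},\ket{-}\bra{-}$, rule out the branches eliminating $\ket{\delta_1}$ and $\ket{\delta_4}$, and observe that the two remaining non-orthogonal projectors cannot resolve $\I_2$. One slip: the element eliminating $\ket{\delta_4}$ (Bob's $\ket{-}$) is $\ket{+}\bra{+}$, not $\ket{-}\bra{-}$, so the surviving elements are $\ket{0}\bra{0}$ and $\ket{-}\bra{-}$; these are still non-orthogonal, so Step~3 goes through unchanged.
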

\begin{proof}
   Without communication, Alice cannot eliminate two states in a single measurement outcome because four distinct qubit states are available at her exposure. It is not possible to construct a POVM element that is simultaneously orthogonal to two of these qubit states. Hence, the optimal strategy must involve one elimination on Alice’s side, followed by another elimination among the remaining states on Bob’s side. Since the states $\{\ket{0}, \ket{+}, \ket{v_+}, \ket{v_-}\}$ form an antidistinguishable set, Alice can always eliminate one state corresponding to each measurement outcome. 
   
A suitable set of POVM elements are 
\[
\Big\{ \frac12\ket{1}\bra{1},
\frac12\ket{-}\bra{-},
\frac{\sqrt{3}-1}{2\sqrt{3}}\ket{v_+^\perp}\bra{v_+^\perp},
\frac{\sqrt{3}+1}{2\sqrt{3}}\ket{v_-^\perp}\bra{v_-^\perp}\Big\}.
\]

Each measurement outcome of Alice reduces the remaining set at Bob’s side to an antidistinguishable subset, ensuring that two states are excluded in every run of the protocol. Note that, in each run, Bob is left with three states, two of which are orthogonal. Therefore, Bob performs a measurement in the corresponding orthogonal basis, achieving $2$-antidistinguishability of the set. However, the protocol does not provide \emph{strong} $2$-antidistinguishability, as not all pairs ($2$-tuples) are exhausted. The following pairs are eliminated in this case:
\[
\{\delta_1,\delta_3\},\{\delta_1,\delta_4\},\{\delta_2,\delta_3\},\{\delta_2,\delta_4\}.
\]

If Bob starts the protocol instead, Bob needs to eliminate one state from his share for the same reason as Alice. He can perform the measurement having following POVM operators;
\[
\left\{g_1\ket{0}\bra{0},\; g_2\ket{1}\bra{1},\; g_3\ket{+}\bra{+},\; g_4\ket{-}\bra{-}\right\},
\]
with $g_i\geq 0$ and $\sum_i g_i=2$.
This form of measurement is the most general one as every outcome eliminates one state and four outcomes are enough. As the elements eliminate qubit state, the element is unique for elimination of that one state. If the second outcome clicks, Alice must antidistinguish between $\ket{+}$, $\ket{v_+}$, and $\ket{v_-}$, which are not antidistinguishable. Hence, this run of the protocol fails to achieve $2$-antidistinguishability. Similarly, when the third outcome occurs, the resulting set at Alice’s side is again not antidistinguishable. If these two outcomes never occur, i.e., $g_2=g_3=0$, that implies the measurement of Bob is not a valid one. Therefore, the overall set remains not $2$-antidistinguishable when Bob initiates the protocol.

\end{proof}
The similar phenomenon exists if we want strong $2$-antidistinguishability from only one party as a starter.
Take a look at the following set of states on $\mathbbm{C}^2\otimes \mathbbm{C}^4$:
\bea\label{x_anti}
\ket{\phi_1}&=&\ket{0}\ket{0}_4\nonumber\\
\ket{\phi_2}&=&\ket{+}(\epsilon\ket{0}_4+\sqrt{1-\epsilon^2}\ket{1}_4)\nonumber\\
\ket{\phi_3}&=&\ket{v_+}\left(\epsilon\ket{0}_4+\frac{\epsilon-\epsilon^2}{\sqrt{1-\epsilon^2}}\ket{1}_4\right.\nonumber\\
&&\left.+\sqrt{\frac{(1-\epsilon)(1+2\epsilon)}{1+\epsilon}}\ket{2}_4\right)\nonumber\\
\ket{\phi_4}&=&\ket{v_-}\left(\epsilon\ket{0}_4+\frac{\epsilon-\epsilon^2}{\sqrt{1-\epsilon^2}}\ket{1}_4\right.\nonumber\\
&+&\epsilon\sqrt{\frac{(1-\epsilon)}{(1+\epsilon)(1+2\epsilon)}}\ket{2}_4\nonumber\\
&+&\left.\sqrt{\frac{(1-\epsilon)(1+3\epsilon)}{1+2\epsilon}}\ket{3}_4\right),
\eea
where $\ket{i}_d$ is the computational basis vector $\ket{i}$ in dimension $d$ and $0<\epsilon<1$.
\begin{proposition}\label{p1}
    Consider the set of states at \eqref{x_anti} with $\frac{7}{20}<\epsilon\leqslant\frac12$.
    If Alice starts the protocol, these states are LOCC strongly $2$-antidistinguishable. \\
    If Bob starts, the states are not strongly LOCC $2$-antidistinguishable in one-way.
\end{proposition}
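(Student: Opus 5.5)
The plan is to first exploit the structure of Bob's local states in \eqref{x_anti}. Write $\ket{b_i}$ for Bob's part of $\ket{\phi_i}$. A direct computation shows that all pairwise overlaps are equal and real: $\la b_i|b_j\ra=\epsilon$ for all $i\neq j$. For example, $\la b_2|b_3\ra=\epsilon^2+(\epsilon-\epsilon^2)=\epsilon$, and similarly for the other pairs. So $\{\ket{b_i}\}_{i=1}^4$ is an equiangular, linearly independent set in $\mathbbm{C}^4$. Alice's local states are exactly those of \eqref{x_1}, namely $\{\ket{0},\ket{+},\ket{v_+},\ket{v_-}\}$. These two facts drive both parts of the proposition.

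For the case where Alice starts, we reuse her four-outcome POVM from the proof of Theorem~\ref{th3}. Each of its outcomes eliminates exactly one state $\ket{\phi_k}$, and every $k$ occurs with nonzero weight. Bob is then left with three of the $\ket{b_i}$, whose pairwise overlaps are all equal to $\epsilon\leqslant\frac12$. By the criterion of \cite{barett} (equivalently \eqref{condAS}), this triple is antidistinguishable. Since the triple is pairwise non-orthogonal, it is also strongly antidistinguishable, so Bob's optimal measurement has three non-null outcomes, each eliminating one of the three remaining states. Hence on Alice's branch $k$ the eliminated pairs are $\{k,j\}$ for every $j\neq k$. Taking the union over the four branches covers all six pairs, which gives strong LOCC $2$-antidistinguishability.

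For the case where Bob starts (one-way), we classify each POVM element $E$ of Bob according to how many of the $\ket{b_i}$ it annihilates. If $E$ annihilates no state, Alice must $2$-antidistinguish her four qubit states, which is impossible, as argued in Theorem~\ref{th3}. If $E$ annihilates exactly $\ket{b_k}$, Alice must antidistinguish the other three of her states. From Theorem~\ref{th3}, $\{\ket{+},\ket{v_+},\ket{v_-}\}$ is not antidistinguishable, so $E$ cannot annihilate exactly $\ket{b_1}$. We then check the other triples with \eqref{condAS}; the triple $\{\ket{0},\ket{+},\ket{v_-}\}$ sits at the boundary of the condition and needs care. Moreover, a triple containing an orthogonal-free structure would force Alice to use only certain eliminations, which restricts the set of pairs produced on that branch. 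If $E$ annihilates two states $\ket{b_k},\ket{b_l}$, then $E$ must be supported on the two-dimensional orthocomplement of $\{\ket{b_k},\ket{b_l}\}$ inside $\mathbbm{C}^4$. In every case we record which pairs the branch can eliminate.

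The main obstacle, and the place where the bound $\epsilon>\frac{7}{20}$ must enter, is to show that no valid POVM $\sum E=\I_4$ built from the allowed elements covers all six pairs. My first attempt would be a dual witness: find a positive operator $W$, or a linear functional on the Gram-matrix coordinates of the equiangular frame, that is nonnegative on every admissible element type but incompatible with the completeness relation $\tr(W\,\I_4)$ once certain pairs are required with nonzero weight. The pairs needing attention are those involving $\ket{\phi_1}$ that Alice cannot supply. For instance, the pair $\{1,2\}$ must then come from an element $E$ annihilating both $\ket{b_1}$ and $\ket{b_2}$. Using the symmetric Gram matrix $G=(1-\epsilon)\I+\epsilon J$, I expect the feasibility of such a decomposition to reduce to a polynomial inequality in $\epsilon$ that fails exactly for $\epsilon>\frac{7}{20}$. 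The step I am least sure of is pinning down that threshold and handling the boundary triple, so I would verify it numerically via the SDP of \cite{Johnston2025tightbounds} before writing the analytic argument.
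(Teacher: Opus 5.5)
Your Alice-first argument is correct and is the paper's argument. Alice uses the POVM from Theorem~\ref{th3}. Bob is then left with an equiangular triple with overlap $\epsilon\leqslant\frac12$, which is antidistinguishable. Pairwise non-orthogonality forces all three of Bob's outcomes to be non-null, so all six pairs are covered.

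The Bob-first part has a genuine gap, and it cannot be closed as planned: the claim is false on part of the stated range, so the dual witness you hope for does not exist there.

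\textbf{Completing your case split.} Among Alice's triples, $\{\ket0,\ket{v_+},\ket{v_-}\}$ and $\{\ket0,\ket+,\ket{v_-}\}$ are antidistinguishable. Every real qubit triple saturates the second inequality of \eqref{condAS}, so there is no delicate boundary case. By contrast, $\{\ket+,\ket{v_+},\ket{v_-}\}$ and $\{\ket0,\ket+,\ket{v_+}\}$ are not antidistinguishable. Hence every element of Bob's POVM must annihilate $\ket{b_2}$, or $\ket{b_3}$, or both $\ket{b_1}$ and $\ket{b_4}$. The critical pair is therefore $\{1,4\}$, not $\{1,2\}$; the pair $\{1,2\}$ already arises when Bob kills only $\ket{b_2}$ and Alice eliminates $\ket0$. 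Moreover, elements killing only $\ket{b_2}$ or only $\ket{b_3}$ cannot sum to $\I$. Grouping them as $P+Q=\I$ with $P\ket{b_2}=Q\ket{b_3}=0$ would give $\la b_3|b_2\ra=\la b_3|P|b_2\ra+\la b_3|Q|b_2\ra=0$, a contradiction. So every valid Bob-first protocol contains an element annihilating both $\ket{b_1}$ and $\ket{b_4}$, and it automatically covers $\{1,4\}$. The only question is feasibility: can one write $\I=A+B+C$ with $A,B,C\geqslant0$, $A\ket{b_2}=0$, $B\ket{b_3}=0$ and $C\ket{b_1}=C\ket{b_4}=0$?

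\textbf{Where the threshold actually lies.} Work in the coordinates $\hat X_{ij}=\la b_i|X|b_j\ra$. This is a PSD-preserving congruence because the $\ket{b_i}$ form a basis. Feasibility then asks for PSD matrices summing to $G=(1-\epsilon)\I+\epsilon J$, with $J$ the all-ones matrix, supported on the index sets $\{1,3,4\}$, $\{1,2,4\}$ and $\{2,3\}$. Put $\hat A=K_s$ on indices $(1,3,4)$, $\hat B=K_s$ on indices $(1,2,4)$, and $\hat C=L_s$ on indices $(2,3)$, where
\[
K_s=\begin{pmatrix}\tfrac12&\epsilon&\tfrac{\epsilon}{2}\\ \epsilon&s&\epsilon\\ \tfrac{\epsilon}{2}&\epsilon&\tfrac12\end{pmatrix},\qquad
L_s=\begin{pmatrix}1-s&\epsilon\\ \epsilon&1-s\end{pmatrix}.
\]
These sum to $G$. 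They are PSD iff $\frac{4\epsilon^2}{1+\epsilon}\leqslant s\leqslant1-\epsilon$, which is possible iff $\epsilon\leqslant1/\sqrt5\approx0.447$.

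For example, take $\epsilon=\frac25$ and $s=\frac12$. Bob's outcome $A$ leaves Alice with $\{\ket0,\ket{v_+},\ket{v_-}\}$, and outcome $B$ leaves her with $\{\ket0,\ket+,\ket{v_-}\}$; both are strongly antidistinguishable. Outcome $C$ eliminates $\{1,4\}$. All six pairs are covered, so Bob-first one-way strong $2$-antidistinguishability holds for $\frac{7}{20}<\epsilon\leqslant\frac{1}{\sqrt5}$.

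Conversely, average any feasible triple over the index symmetries $(1\,4)$ and $(1\,4)(2\,3)$; the latter exchanges the roles of $A$ and $B$. The result lies in exactly this one-parameter family. So for $\epsilon>1/\sqrt5$, Bob-first one-way is not even $2$-antidistinguishable. The correct lower bound is $1/\sqrt5$, not $\frac{7}{20}$, and a completed version of your case analysis proves the statement only on $\frac{1}{\sqrt5}<\epsilon\leqslant\frac12$.

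The paper's own proof has the same hole. Its SDP only shows that Bob cannot $2$-antidistinguish on his own. It then assumes each of Bob's outcomes kills exactly one state. The pair list it reports, which omits $\{1,4\}$, is not produced by any valid Bob POVM.
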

\begin{proof}
    Alice's protocol is the same as that described in the previous theorem \ref{th3}. 
For each outcome of Alice's measurement, Bob is required to antidistinguish among three states on his side. 
A special feature of Bob's states is that all pairwise inner products are equal to a fixed value $\epsilon$. 
According to Ref.~\cite{barett}, three states are antidistinguishable when $\epsilon \leqslant \frac{1}{2}$, which serves as a sufficient condition in this case. 
Therefore, in each run of the protocol, Alice excludes one state, and subsequently, Bob excludes one of the remaining states after Alice's measurement. 
This process exhausts all possible pairwise eliminations, implying that the set of states is strongly \emph{$2$-antidistinguishable}.

Now consider the scenario where Bob initiates the protocol. 
Using semidefinite programming, we verify that Bob cannot eliminate two states simultaneously in the given range of $\epsilon$ (Appendix \ref{B}). 
However, four states on Bob's side satisfy the sufficient condition for antidistinguishability given in \eqref{scond}, which ensures that Bob can eliminate one state at each of his outcomes. 
When Bob eliminates $\ket{\phi_1}$, Alice must exclude one state among $\ket{\phi_2}$, $\ket{\phi_3}$, and $\ket{\phi_4}$. 
Since the states $\ket{+}$, $\ket{v_+}$, and $\ket{v_-}$ are not antidistinguishable, Alice cannot eliminate any of them in this run of the protocol. Similarly, one can check all the possible eliminated pairs in this case: 
\[
\Big\{\{\ket{\phi_2},\ket{\phi_1}\}, \{\ket{\phi_2},\ket{\phi_3}\}, \{\ket{\phi_2},\ket{\phi_4}\}, \{\ket{\phi_3},\ket{\phi_1}\} ,
\]
\[
\{\ket{\phi_3},\ket{\phi_4}\}\Big\}.
\]
The only pair that cannot be excluded is $\{\ket{\phi_1},\ket{\phi_4}\}$.

Hence, the overall set of states is not strong LOCC $2$-antidistinguishable if Bob starts.
\end{proof}

However, by suitably adjusting the parameter $\epsilon$, the states in \eqref{x_anti} can be made strongly LOCC $2$-antidistinguishable irrespective of the commencing party.
\begin{proposition}\label{p2}
    The states at \eqref{x_anti} are Strongly $2$-antidistinguishable irrespective of the starter if $\epsilon\leqslant\frac13$.
\end{proposition}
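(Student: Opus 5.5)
We split the argument by which party starts the protocol. If Alice starts, we reuse the protocol of Theorem \ref{th3} and Proposition \ref{p1}. Alice measures with the four-outcome POVM that eliminates one of $\ket{0},\ket{+},\ket{v_+},\ket{v_-}$. Bob then holds three of his four local states. By construction, all pairwise inner products of Bob's states equal $\epsilon$. For $\epsilon\leqslant\frac13<\frac12$, the criterion of \cite{barett} makes every such triple antidistinguishable. We would also confirm via \eqref{condAS} that each triple is strongly antidistinguishable, since the states are pairwise non-orthogonal. Hence, after Alice excludes state $i$, Bob can exclude each of the other three states in some branch. Every pair $\{i,j\}$ is therefore eliminated, and strong $2$-antidistinguishability follows exactly as in Proposition \ref{p1}.

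If Bob starts, the obstruction in Proposition \ref{p1} was that only one of Bob's outcomes could remove a single state, leaving Alice with the non-antidistinguishable triple $\{\ket{+},\ket{v_+},\ket{v_-}\}$. We therefore aim to show that for $\epsilon\leqslant\frac13$ Bob can do more. Bob's four states have Gram matrix $G=(1-\epsilon)\I_4+\epsilon J$, where $J$ is the all-ones matrix. The goal is to show that this set is strongly $2$-antidistinguishable by Bob alone. We would construct, for each pair $\{k,l\}$, an operator $F_{kl}=c\ket{w_{kl}}\bra{w_{kl}}$, where $\ket{w_{kl}}$ lies in the span of Bob's four states and is orthogonal to $\ket{\phi^B_k}$ and $\ket{\phi^B_l}$. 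Working in the Gram representation, each $\ket{w_{kl}}$ is determined by its overlaps with the states. By the $S_4$ symmetry of $G$, it is natural to take $\ket{w_{kl}}$ with overlap $0$ on $k,l$ and overlap $t$ on the other two states. We would then impose $\sum_{k<l}F_{kl}=\I$ on the four-dimensional span. By symmetry this reduces to a single scalar equation, fixing $c$, and one positivity condition. We expect that condition to be exactly $\epsilon\leqslant\frac13$, the point where the symmetric weighting stays valid. This is analogous to the case $\epsilon\leqslant\frac12$ for three states in \cite{barett}. Once this holds, all six pairs are excluded with no communication, so the set is strongly LOCC $2$-antidistinguishable when Bob starts.

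The main obstacle is this symmetric construction: verifying that the six rank-one operators sum to the identity on Bob's $\mathbbm{C}^4$. A cleaner route would be to write the states as $\sqrt{1-\epsilon}\,e_i+\sqrt{\epsilon}\,u$ in an embedding where $u$ is orthogonal to all $e_i$. Here one must handle the fact that Bob's states actually span only $\mathbbm{C}^4$, not five dimensions. If the exact identity fails, we would fall back on the SDP characterization of \cite{Johnston2025tightbounds}, as in Appendix \ref{B}. We would check numerically at $\epsilon=\frac13$ that the optimal $2$-antidistinguishing measurement has all six elements non-null. A perturbation and convexity argument in $\epsilon$ would then extend this to $\epsilon<\frac13$.
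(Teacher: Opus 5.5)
\textbf{Alice-initiated half.} This part matches the paper: Alice uses the POVM of Theorem \ref{th3}, then Bob antidistinguishes an equal-overlap triple, which is possible for $\epsilon\leqslant\frac12$. For the Bob-initiated case the paper only cites the SDP of Appendix \ref{B}.

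\textbf{The gap: your rank-one ansatz for Bob.} This construction fails for every $\epsilon<\frac13$.

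Since $G=(1-\epsilon)\I+\epsilon J$ is positive definite, Bob's states form a basis. Write $F=\sum_{m,n}X_{mn}\ket{\tilde\phi_m}\bra{\tilde\phi_n}$ with $\la\phi^B_m|\tilde\phi_n\ra=\delta_{mn}$. Then:
\begin{itemize}
\item $F\geqslant0$ iff $X\geqslant0$;
\item $F$ eliminates $\ket{\phi^B_k}$ iff row and column $k$ of $X$ vanish;
\item $\I$ corresponds to $X=G$.
\end{itemize}
Your $F_{kl}=c\ket{w_{kl}}\bra{w_{kl}}$ gives $X^{(kl)}=c|t|^2(e_m+e_n)(e_m+e_n)^{T}$ on the complementary pair $\{m,n\}$, with $e_m$ the standard basis of $\mathbbm{C}^4$. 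The six of these sum to $c|t|^2(2\I+J)$. Matching $G$ requires both $2c|t|^2=1-\epsilon$ and $c|t|^2=\epsilon$, i.e.\ $\epsilon=\frac13$ exactly.

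Equivalently, by Schur's lemma an $S_4$-invariant operator on Bob's space (trivial $\oplus$ standard representation) has two independent eigenvalues. So $\sum F_{kl}=\I$ is two scalar equations, not one. A rank-one ansatz with $c\geqslant0$ also has no positivity condition that could produce an inequality. This is consistent with the paper's numerical elements at $\epsilon=\frac13$, which are rank one.

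Your fallback does not close this. A numerical check at the single point $\epsilon=\frac13$ plus ``perturbation'' gives at most a neighbourhood. Convexity in $\epsilon$ is not available in operator space, because the states themselves move with $\epsilon$; it only becomes usable in the Gram-matrix picture, which you do not set up.

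\textbf{The repair.} Use the whole two-dimensional complement of $\{\ket{\phi^B_k},\ket{\phi^B_l}\}$. Add $\ket{w^-_{kl}}$ with overlaps $(t,-t)$ on the other two states and take rank-two elements. In the $X$-picture this means putting $X^{(kl)}=\begin{pmatrix}1/3&\epsilon\\ \epsilon&1/3\end{pmatrix}$ on the block $\{m,n\}$.
\begin{itemize}
\item These blocks sum to $G$: each index lies in three pairs, and each off-diagonal entry lies in exactly one block.
\item Each block is positive semidefinite iff $\epsilon\leqslant\frac13$.
\item This gives a measurement with all six elements non-null for every $\epsilon\leqslant\frac13$ at once.
\end{itemize}
Since $G$ is permutation-invariant, $S_4$ acts unitarily on Bob's space. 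Twirling any $2$-antidistinguishing measurement over $S_4$ shows two things: here $2$-antidistinguishability and strong $2$-antidistinguishability coincide, and $\epsilon\leqslant\frac13$ is also necessary.

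With this correction your route would be stronger than the paper's. It turns the SDP evidence into an exact proof and explains the threshold $\frac13$ as decomposability of $G$ into positive $2\times2$ principal blocks. As written, though, the Bob-initiated half is unproved for $\epsilon<\frac13$.
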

\begin{proof}
   Proposition~\eqref{p1} already establishes that the set of states is strongly LOCC $2$-antidistinguishable when Alice initiates the protocol under the condition $\epsilon \leqslant \tfrac{1}{2}$. 
   
Furthermore, a semidefinite program confirms that the states on Bob's side are also strongly $2$-antidistinguishable when $\epsilon\leqslant\frac13$ (Appendix \ref{B}). 
Therefore, when Bob initiates the protocol, he can trivially perform strong $2$-antidistinguishability of the set without any communication.
\end{proof}
\section{Nonlocality without entanglement}
Now we move into the results concerning the relation between local and global exclusion tasks and that will eventually lead us to our main goal of this work, i.e., \emph{nonlocality without entanglement}. First theorem in this direction provides a minimal set of states which shows the peculiar quantum property.
\begin{thm}
    Three $d\otimes d$ globally antidistinguishable states may not be LOCC antidistinguishable.
\end{thm}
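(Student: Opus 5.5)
The plan is to give an explicit counterexample. Theorem~\ref{th2} reduces LOCC antidistinguishability of product states to a purely local question, and the inner products of product states multiply. By Lemma~\ref{l5} and Theorem~\ref{th2}, a set of bipartite product states $\{\ket{a_k}\ket{b_k}\}_{k=1}^3$ is LOCC antidistinguishable only if the last party to act can antidistinguish her local states alone. Since the result does not depend on the starter, this means that at least one of the local sets $\{\ket{a_k}\}_k$ or $\{\ket{b_k}\}_k$ is antidistinguishable. So it suffices to find three product states for which neither local set is antidistinguishable, while the global set satisfies the conditions \eqref{condAS}.

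The construction I would use is symmetric. Take three qubit states $\{\ket{a_k}\}$ whose Bloch vectors form an equilateral triangle on a circle of latitude, so that all pairwise squared overlaps equal some common value $s\in[1/4,1)$; tuning the latitude gives any such $s$. Take the same construction on Bob's side, so that $\ket{b_k}=\ket{a_k}$ suffices. Locally, all three squared overlaps equal $s$. Choosing $s>1/3$, for instance $s=0.45$, violates the first condition $3s<1$ of \eqref{condAS}, so neither local set is antidistinguishable. Hence by Theorem~\ref{th2} the set is not LOCC antidistinguishable, whichever party starts and whatever the number of rounds.

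Globally, $|\la a_ib_i|a_jb_j\ra|^2=s^2$ for every pair, so all three squared overlaps equal $x=s^2\approx 0.2025$. I will then check \eqref{condAS}: $3x\approx0.6075<1$, and $(3x-1)^2\approx0.154\geqslant 4x^3\approx0.033$. Alternatively, since $s^{2}\leqslant 1/4$ gives pairwise moduli at most $1/2$, the sufficient condition of \cite{barett} applies directly. The construction lives in $\mathbbm{C}^2\otimes\mathbbm{C}^2$ and embeds trivially into $d\otimes d$ by viewing qubits as a subspace of each local space, which proves the statement.

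Finally, to support the claim that three is minimal, I will note that two pure states are antidistinguishable iff they are orthogonal. Two product states are orthogonal iff one of the local pairs is orthogonal, and that party can then exclude a state by local measurement, so no separation is possible with two states.

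The main obstacle is the reduction step. Here I rely entirely on Theorem~\ref{th2} and Lemma~\ref{l5}: in particular, that on the branch where no earlier party eliminates a state, the last party's post-measurement local states are unchanged because the states are products. I would restate this briefly so that the non-antidistinguishability of both local triples visibly rules out every LOCC protocol, including protocols where an earlier party uses a measurement that eliminates nothing on some outcomes.
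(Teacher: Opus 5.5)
Your proposal is correct and follows essentially the paper's route: a product-state counterexample whose local triples are not antidistinguishable, Theorem~\ref{th2} (via Lemma~\ref{l5}) to rule out every LOCC protocol, and \eqref{condAS} or the $1/2$ bound of \cite{barett} to certify global antidistinguishability. The only difference is the example. The paper uses $\ket{0}\ket{0}$, $\ket{0}\ket{w}$, $\ket{w}\ket{w}$ with $\ket{w}=\tfrac12\ket{0}+\sum_i r_i\ket{\zeta_i}$, where local non-antidistinguishability is immediate from the repeated state; your symmetric $\ket{a_k}\ket{a_k}$ with equal local squared overlaps $s=0.45$ (any $s\in(1/4,1/2]$ works) instead needs \eqref{condAS} locally, and works just as well.
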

\begin{proof}
Consider the following three states:
\begin{align}
 &\ket{0}_d\ket{0}_d, \nonumber\\
 &\ket{0}_d\left(\tfrac{1}{2}\ket{0} + \sum_{i=1}^{d-1} r_i \ket{\zeta_i}\right), \nonumber\\
 &\left(\tfrac{1}{2}\ket{0} + \sum_{i=1}^{d-1} r_i \ket{\zeta_i}\right)
  \left(\tfrac{1}{2}\ket{0} + \sum_{i=1}^{d-1} r_i \ket{\zeta_i}\right).
\end{align}
These three states are globally strong antidistinguishable, as they satisfy the necessary and sufficient conditions given in Eq.~\eqref{ADSpsi123}.

Theorem~\ref{th2} states that any set of product states is LOCC-antidistinguishable only if it is trivially antidistinguishable without any communication. 
However, it is evident that the corresponding sets of local states at each party are not antidistinguishable as two states are the same and the other is not orthogonal to them. 
Consequently, the overall set of states is not LOCC-antidistinguishable.
\end{proof}
This is our first result demonstrating \emph{nonlocality without entanglement} in the context of antidistinguishability of states. 
The minimal number of states required to exhibit this phenomenon is three. 
For two states, distinguishability and antidistinguishability coincide. 
Ref.~\cite{hardy} has already established the impossibility of nonlocality with two states under distinguishability. 
Here, we extend this phenomenon to the case of $2$-antidistinguishability.

Consider the following four quantum states acting on $\mathbbm{C}^2\otimes\mathbbm{C}^2$:\\
\bea\label{pbr}
&&\ket{+\theta}\ket{+\theta},
\ket{+\theta}\ket{-\theta},\nonumber\\
&&\ket{-\theta}\ket{+\theta},
\ket{-\theta}\ket{-\theta},
\eea
where $\ket{+\theta} = \cos\theta\ket{0}+\sin\theta\ket{1}$ and $\ket{-\theta} = \cos\theta\ket{0}-\sin\theta\ket{1}$.
\begin{thm}\label{th6}
  The states described at \eqref{pbr} are global strongly $2$-antidistinguishable but not LOCC $2$-antidistinguishable with $\cos2\theta\leqslant \sqrt{2}-1$.  
\end{thm}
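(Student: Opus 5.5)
The plan has two independent halves: a global construction for strong $2$-antidistinguishability, and an impossibility argument for LOCC. Write $c=\la +\theta|-\theta\ra=\cos 2\theta$, and assume $c\neq 0$. If $c=0$ the local states are orthogonal and the set is trivially locally distinguishable, so this case must be excluded from the statement anyway.

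\emph{Global part.} I would mimic the construction used for $\mathbbm{S}_4$ (which is the special case $\theta=5\pi/24$ of \eqref{pbr}). I take six rank-one operators $F_{kl}=\ket{\phi_{kl}}\bra{\phi_{kl}}$, one for each pair. Each $\ket{\phi_{kl}}$ is chosen in the two-dimensional orthogonal complement of the two states it eliminates. Concretely, for the pairs sharing a local factor, e.g.\ $\{1,2\}$ (common $\ket{+\theta}$ on Alice), I take $\ket{\phi_{12}}=\ket{+\theta^\perp}\otimes\ket{\chi_{12}}$-type vectors or superpositions as in $\mathbbm{S}_4$. For the "diagonal" pairs $\{1,4\},\{2,3\}$ I take vectors of the form $\mu\ket{00}\pm\nu\ket{11}$ in the $\{\ket{0},\ket{1}\}$ basis. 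I then impose $\sum F_{kl}=\I_4$. By the symmetry of the set under $\ket{1}\mapsto-\ket{1}$ on each factor and under swapping the parties, the resolution-of-identity condition reduces to a few real equations in the free coefficients. I would solve these and check that all weights are nonnegative and nonzero exactly when $c\leqslant\sqrt2-1$; at equality this reproduces the $\mathbbm{S}_4$ coefficients $\sqrt{\sqrt2-1}$ and $\sqrt2-1$. Nonzero weights give \emph{strong} $2$-antidistinguishability. As a cross-check one can confirm the boundary with the SDP of \cite{Johnston2025tightbounds}.

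\emph{LOCC part.} Every LOCC protocol realises a separable POVM $\{A_k\otimes B_k\}_k$, where each branch gives one element. Since the states are product, element $k$ eliminates $\ket{a}\ket{b}$ iff $\bra{a}A_k\ket{a}=0$ or $\bra{b}B_k\ket{b}=0$. The local states are only $\ket{\pm\theta}$, so a zero on Alice's side kills exactly the two global states carrying that Alice factor, and similarly for Bob. A branch therefore eliminates at least two states only if at least one of the following holds:
\begin{itemize}[label={}]
\item $A_k$ annihilates $\ket{+\theta}$ or $\ket{-\theta}$;
\item $B_k$ annihilates $\ket{+\theta}$ or $\ket{-\theta}$.
\end{itemize}
On a qubit, annihilating $\ket{\pm\theta}$ forces $A_k\propto\ket{\pm\theta^\perp}\bra{\pm\theta^\perp}$. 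Grouping the elements accordingly, I would write
\be
\I_4=P\otimes X+Q\otimes Y+Z\otimes P+W\otimes Q+R,
\ee
where $P=\ket{+\theta^\perp}\bra{+\theta^\perp}$, $Q=\ket{-\theta^\perp}\bra{-\theta^\perp}$, $X,Y,Z,W\geqslant0$, and $R\geqslant 0$ collects the elements that fail, which must vanish. To reach a contradiction I would first use the Lemma~\ref{l5}-type reasoning in a branch-wise form. Alice's effective POVM along any protocol is a valid qubit POVM that cannot distinguish the non-orthogonal $\ket{\pm\theta}$. Hence some branch carries an Alice effect $A$ with $\bra{+\theta}A\ket{+\theta},\bra{-\theta}A\ket{-\theta}>0$. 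Conditioned on that branch, Bob's remaining operations form a qubit POVM that likewise must have an effect with neither zero. Along that combined branch no state is eliminated, which contradicts $2$-antidistinguishability.

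The main obstacle is making this branch argument rigorous for multi-round protocols, since Lemma~\ref{l5} is stated only for single exclusion. If the branch argument resists, I would fall back on the operator identity above. I would sandwich it with $\ket{\theta'}\ket{\theta'}$ for a well-chosen local vector, and use $\tr$-normalisation (each local effect sums to $\I_2$) to show the four rank-constrained terms cannot reproduce $\I_4$ unless $c=0$.
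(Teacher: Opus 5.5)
Your branch argument is sound for one-way protocols in either direction. The theorem, however, concerns general LOCC, and for multi-round protocols neither of your routes closes the argument as written.

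\emph{The branch argument tracks the wrong invariant.} Positivity of $\bra{\pm\theta}A\ket{\pm\theta}$ does not guarantee failure at later rounds. A Kraus operator with $F^\dagger F=\mathrm{diag}(m_0,m_1)$ and $m_0\cos^2\theta=m_1\sin^2\theta$ sends $\ket{\pm\theta}$ to nonzero \emph{orthogonal} vectors, after which a later round by Alice could annihilate one of them. Moreover, Alice's leaf effects in a multi-round protocol do not form a POVM.

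\emph{The fallback fails as stated.} An expectation value $\bra{\omega}\cdot\ket{\omega}$ is nonnegative on every term of $\I_4=P\otimes X+Q\otimes Y+Z\otimes P+W\otimes Q$. To give a contradiction it would have to vanish on all of them for arbitrary positive $X,Y$, forcing $(P\otimes\I)\ket{\omega}=(Q\otimes\I)\ket{\omega}=0$, hence $\ket{\omega}=0$. The normalisation you invoke (local effects summing to $\I_2$) also fails for the leaves of a multi-round protocol.

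\emph{The missing idea is an off-diagonal functional.} For the separable POVM $\{A_k\otimes B_k\}$ realised by any finite-round protocol,
\begin{equation*}
\sum_k \bra{+\theta}A_k\ket{-\theta}\,\bra{+\theta}B_k\ket{-\theta}=\bra{+\theta}\bra{+\theta}\Big(\sum_k A_k\otimes B_k\Big)\ket{-\theta}\ket{-\theta}=c^2\neq 0 .
\end{equation*}
If leaf $k$ eliminates any state, then $A_k$ or $B_k$ annihilates $\ket{+\theta}$ or $\ket{-\theta}$, and by positivity the corresponding off-diagonal factor vanishes. So some leaf eliminates nothing. In your notation, $\bra{+\theta}\bra{+\theta}R\ket{-\theta}\ket{-\theta}=c^2$, hence $R\neq 0$. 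This covers every separable measurement and needs no Lemma~\ref{l5}. It even shows the set is not LOCC antidistinguishable for any $c\neq0$. It also repairs the paper's own multi-way step: its claim that post-measurement states are never less overlapping fails branchwise for the filter above. What is true is $\sum_l\bra{u}M_l\ket{v}=\la u|v\ra\neq0$, so some branch keeps each local pair non-orthogonal.

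\emph{The global half.} The paper only cites Ref.~\cite{webb}, so a sketch is acceptable there, but your ansatz is too narrow. The vectors listed for $\mathbbm{S}_4$ satisfy $\tan^2\theta=\sqrt2-1$, so they sit exactly at $c=\sqrt2-1$, not at $\theta=5\pi/24$ (where $c\approx 0.26$). Write $t=\tan\theta$, $a=(t,-1)^T$, $b=(t,1)^T$, $D=\mathrm{diag}(x_0,x_1)$, $\omega_\pm=t^2\ket{00}\pm\ket{11}$ and $\xi_\pm=\ket{01}\pm\ket{10}$. Take the symmetric family
\begin{equation*}
F_{12}=aa^\dagger\otimes D,\quad F_{34}=bb^\dagger\otimes D,\quad F_{13}=D\otimes aa^\dagger,\quad F_{24}=D\otimes bb^\dagger,
\end{equation*}
\begin{equation*}
F_{14}=w\,\omega_-\omega_-^\dagger+\tfrac v2\,\xi_-\xi_-^\dagger,\quad F_{23}=w\,\omega_+\omega_+^\dagger+\tfrac v2\,\xi_+\xi_+^\dagger .
\end{equation*}
Both directions of $F_{14}$ and of $F_{23}$ lie in their two-dimensional supports. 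The resolution of identity becomes $4t^2x_0+2wt^4=1$, $2t^2x_1+2x_0+v=1$ and $4x_1+2w=1$. Your rank-one choice is $x_1=v=0$, which forces $w=x_0=\tfrac12$ and $2t^2+t^4=1$, so it closes only at the endpoint. Allowing $x_1,v\geqslant0$, a solution with all six elements nonzero exists exactly when $(1-t^2)^2\leqslant 2t^4$, i.e.\ $c\leqslant\sqrt2-1$ (for $0<t<1$).

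Your exclusion of $c=0$ is correct, since the set is then an orthonormal product basis. Also, $\theta\mapsto\pi/2-\theta$ is a local-unitary equivalence sending $c\mapsto -c$, so the intended range is $0<|\cos2\theta|\leqslant\sqrt2-1$.
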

\begin{proof}
   The ref. \cite{webb} showed that these states are strongly $2$-antidistinguishable with a $6$-outcome entangled measurement if $\theta$ lies in the given range.

   Both the parties got two pairs of same states. If they want to exclude one state from their part using a POVM operator orthogonal to that state, they will eventually exclude two states.
   For the local scenario, there is only one possibility, i.e., the starter excludes two states alone. If Alice starts, the optimum measurement takes the form as $\{g_1\ket{+\theta^\perp} \bra{+\theta^\perp}, g_2 \ket{-\theta^\perp} \bra{-\theta^\perp}, \mathbbm{I}-(g_1\ket{+\theta^\perp}\bra{+\theta^\perp} + g_2 \ket{-\theta^\perp}\bra{-\theta^\perp})\}$. Note that $\la +\theta|-\theta\ra\neq 0$. The occurrence of last outcome ensures the no exclusion from Alice's states. In this case, if Alice communicate to Bob, Bob has to $2$-antidistinguish his set of states. For the same reason, the optimum measurement of Bob will take the form as Alice's. The event of last outcome for both the parties makes the set locally not $2$-antidistinguishable. For the symmetry of the states, if Bob starts, it would be still not locally $2$-antidistinguishable.

   Now we prove that using multi-way protocol also does not help them. Suppose, Bob communicates back to Alice when he fails. In this case Alice has to $2$-antidistinguish her four post-measurement states produced from the previous run. As Alice had two pairs of states, the post-measurement states would look like $\{\rho_1,\rho_1,\rho_2,\rho_2\}$. From the previous discussion, it is clear that, for $2$-antidistinguishability, $\rho_1$ and $\rho_2$ must be orthogonal. This is not possible as post-measurement states are never less overlapping than the original states.
\end{proof}

Using the same set of states, we can show that the phenomenon of \emph{nonlocality without entanglement} exists under antidistinguishability too.
\begin{thm}
    The states at \eqref{pbr} are global strongly antidistinguishable but locally not antidistinguishable if $2\theta=\frac{\pi}{4}$.
\end{thm}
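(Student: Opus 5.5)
Two parts need proving: global strong antidistinguishability at $2\theta=\pi/4$, and failure of LOCC antidistinguishability. For the global part, note that at $2\theta=\pi/4$ we have $\cos 2\theta = 1/\sqrt{2} > \sqrt{2}-1$. So Theorem~\ref{th6} does not apply directly, and we will work with single-state elimination instead. The pairwise overlaps of the four states in \eqref{pbr} are $\la+\theta|-\theta\ra=\cos2\theta=1/\sqrt2$ when the two states differ in one factor. They are $\cos^2 2\theta=1/2$ when the states differ in both factors. The maximum modulus $1/\sqrt2$ exceeds the bound $\frac{1}{\sqrt2}\sqrt{2/3}=1/\sqrt3$ of \eqref{scond}, so the simple sufficient condition does not suffice.

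We will therefore construct an explicit measurement, guided by the symmetry. Write $|\pm\theta\ra$ in the basis $\{\ket{0},\ket{1}\}$ with $\cos\theta,\sin\theta$, where $\theta=\pi/8$. The state $\ket{s_1}\ket{s_2}$ is eliminated by a vector $\ket{m_{s_1s_2}}$ orthogonal to it. We will use the PBR-type ansatz (cf.\ Pusey et al.\ and \cite{webb}):
\be
\ket{m_{s_1s_2}}=\tfrac{1}{\sqrt2}\big(\ket{00}\mp\ket{11}\big)\ \text{or}\ \tfrac{1}{\sqrt2}\big(\ket{01}\pm\ket{10}\big).
\ee
Each vector gets a sign choice so that it annihilates exactly one product state. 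For example, $\la +\theta,+\theta|\tfrac{1}{\sqrt2}(\ket{01}-\ket{10})\ra=0$, and a rotated Bell basis handles the remaining three states. At $\theta=\pi/8$ the PBR basis $\{\frac{1}{\sqrt2}(\ket{0}\ket{1}+\ket{1}\ket{0}),\dots\}$ with mixed $\ket{\pm}$ components is exactly orthogonal to each state in turn. We verify this by direct inner-product computation, which is routine. The result is a four-outcome projective measurement with no null element, hence strong antidistinguishability. If the naive Bell ansatz fails, we will instead solve the Caves-type SDP symmetrically: take $M_k=\lambda\ket{m_k}\bra{m_k}$ related by the local symmetry group generated by $Z\otimes\I$ and $\I\otimes Z$, and impose $\sum_k M_k=\I_4$.

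For the local part, we invoke Lemma~\ref{l5} and Theorem~\ref{th2}. Since the set consists of product states, LOCC antidistinguishability holds only if some party alone can antidistinguish its local states. Alice's local ensemble is $\{\ket{+\theta},\ket{+\theta},\ket{-\theta},\ket{-\theta}\}$, and Bob's is the same. A POVM element on a qubit that is orthogonal to $\ket{+\theta}$ must be proportional to $\ket{+\theta^\perp}\bra{+\theta^\perp}$. Antidistinguishability would then require $g_1\ket{+\theta^\perp}\bra{+\theta^\perp}+g_2\ket{-\theta^\perp}\bra{-\theta^\perp}=\I_2$. This forces $\ket{+\theta^\perp}\perp\ket{-\theta^\perp}$, i.e.\ $\cos2\theta=0$, which contradicts $\cos2\theta=1/\sqrt2$. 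Hence neither party can antidistinguish locally, so the set is not LOCC antidistinguishable.

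The main obstacle is the global construction: getting the signs and the exact angle right so that four mutually orthogonal (or suitably weighted) vectors each annihilate a distinct state. The local impossibility is immediate from Theorem~\ref{th2}.
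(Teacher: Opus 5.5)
Your local half is correct and matches the paper's argument, made more explicit. By Lemma~\ref{l5} and Theorem~\ref{th2}, one party would have to antidistinguish $\{\ket{+\theta},\ket{+\theta},\ket{-\theta},\ket{-\theta}\}$ alone, and your computation showing this forces $\ket{+\theta^\perp}\perp\ket{-\theta^\perp}$ is exactly why non-orthogonality rules it out.

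The gap is in the global half. The paper simply imports this from Pusey et al., whereas you try to exhibit the measurement, and the measurement you actually write down fails. Put $c=\cos\frac{\pi}{8}$ and $s=\sin\frac{\pi}{8}$. Each state $\ket{\pm\theta}\ket{\pm\theta}$ has $\ket{00}$-amplitude $c^2$ and $\ket{11}$-amplitude $\pm s^2$. So $\frac{1}{\sqrt2}(\ket{00}\pm\ket{11})$ has overlap $\frac{1}{\sqrt2}$ or $\frac{\cos2\theta}{\sqrt2}=\frac12$ with every state and eliminates nothing. The singlet annihilates both $\ket{+\theta}\ket{+\theta}$ and $\ket{-\theta}\ket{-\theta}$, not exactly one. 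The unrotated PBR basis is also not ``exactly orthogonal to each state in turn'': its fourth vector $\frac{1}{\sqrt2}(\ket{+}\ket{-}+\ket{-}\ket{+})=\frac{1}{\sqrt2}(\ket{00}-\ket{11})$ is one of the useless vectors just described. So the ``routine'' inner-product check you defer would come out negative.

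Moreover, the singlet-first route cannot be completed by any choice of the remaining outcomes. Twirl an antidistinguishing POVM over $\{Z^a\otimes Z^b\}$. This gives $N\geq 0$ with all diagonal entries $\frac14$ and $N\ket{+\theta}\ket{+\theta}=0$. The first row of this equation requires $\cos2\theta\leq\sin2\theta$, which is saturated at $\theta=\pi/8$ and forces $N$ to be rank one. Hence the antidistinguishing measurement is unique up to splitting outcomes, and the singlet is not among its elements.

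The idea that does work is your fallback, which you stated but did not carry out. Under the $\{Z^a\otimes Z^b\}$ twirl, $\sum_{a,b}(Z^a\otimes Z^b)\ket{m}\bra{m}(Z^a\otimes Z^b)$ equals $4$ times the diagonal part of $\ket{m}\bra{m}$. So you need a unit vector with all four amplitudes of modulus $\frac12$ that is orthogonal to $\ket{+\theta}\ket{+\theta}$. The vector $\ket{m}=\frac12(\ket{00}-\ket{01}-\ket{10}-\ket{11})$ has overlap $\frac12(\cos2\theta-\sin2\theta)=0$ at $\theta=\pi/8$. Its four images under $Z^a\otimes Z^b$ form an orthonormal basis, and each eliminates exactly one of the four states. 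That gives strong antidistinguishability with no null element.

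Equivalently, and closer to what the paper's citation implicitly uses: $\la +\theta|-\theta\ra=\frac{1}{\sqrt2}=\la 0|+\ra$. So there is a qubit unitary $U$ with $U\ket{0}=\ket{+\theta}$ and $U\ket{+}=\ket{-\theta}$, and $U\otimes U$ carries the PBR basis for $\{\ket{0},\ket{+}\}^{\otimes 2}$ to an antidistinguishing basis for \eqref{pbr}. This rotation is what your phrase ``rotated Bell basis'' leaves unspecified, and it is the missing step.
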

\begin{proof}
   The work in Ref.~\cite{Pusey_2012} established that the set of states given in Eq.~\eqref{pbr} is strongly antidistinguishable.
   
For a set of states to be \emph{locally} antidistinguishable, at least one of the parties must possess an antidistinguishable subset of local states. 
In the present case, both parties share the same set of local states. 
Since $|\langle +\theta | -\theta \rangle| \neq 0$, the local states are non-orthogonal and hence not perfectly distinguishable, which in turn implies that the set is not locally antidistinguishable.
\end{proof}
 Moving on, now we prove another version of nonlocality argument. For that reason,
consider the following four $2\otimes 2$ states:
\bea\label{n2}
\ket{\zeta_1}&=&\ket{0}\ket{0},
\ket{\zeta_2}=\ket{1}\ket{1}\nonumber\\
\ket{\zeta_3}&=&\ket{+}\ket{\eta_1},
\ket{\zeta_4}=\ket{-}\ket{\eta_2},
\eea
where $\ket{\eta_1}=\cos\frac{\pi}{6}\ket{0}+\sin\frac{\pi}{6}\ket{1}$ and $\ket{\eta_2}=\cos\frac{\pi}{12}\ket{0}+\sin\frac{\pi}{12}\ket{1}$.
\begin{proposition}\label{th7}
    Within one-way paradigm, the states of \eqref{n2} are strongly global $2$-antidistinguishable and not strongly LOCC $2$-antidistinguishable but strongly LOCC antidistinguishable.
\end{proposition}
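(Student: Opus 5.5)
The plan is to prove the three claims separately: strong global $2$-antidistinguishability, failure of strong one-way LOCC $2$-antidistinguishability from either starter, and strong LOCC antidistinguishability. The last is the easiest. Alice's local states are $\{\ket{0},\ket{1},\ket{+},\ket{-}\}$, which is exactly the set $\mathbbm{S}_3$ of Example~\ref{e3}. I would let Alice perform the measurement $\{\frac12\ket{1}\bra{1},\frac12\ket{0}\bra{0},\frac12\ket{-}\bra{-},\frac12\ket{+}\bra{+}\}$. Each outcome is orthogonal to, and so eliminates, $\ket{\zeta_1},\ket{\zeta_2},\ket{\zeta_3},\ket{\zeta_4}$ respectively. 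All four states are therefore excluded at some run without any communication, which gives strong LOCC antidistinguishability.

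For the global part, I would build the six-outcome POVM $\{F_{k,l}\}$ as a convex mixture of $2$-antidistinguishing measurements whose eliminated pairs together cover all six $2$-tuples. For each pair $\{\zeta_k,\zeta_l\}$, the vectors orthogonal to both span a $2$-dimensional subspace of $\mathbbm{C}^2\otimes\mathbbm{C}^2$. For instance, $\{\zeta_1,\zeta_2\}$ is killed by anything in $\mathrm{span}\{\ket{01},\ket{10}\}$, while $\{\zeta_3,\zeta_4\}$ needs an entangled vector.

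I would then set up the SDP: minimise $\sum_{k<l}\tr((\zeta_k+\zeta_l)F_{k,l})$ subject to $F_{k,l}\geqslant0$, $\sum F_{k,l}=\I_4$, and $\tr F_{k,l}\geqslant\delta>0$ for every pair. Feasibility with optimal value zero establishes strong $2$-antidistinguishability. I would also try to extract explicit rank-one $F_{k,l}$ from the numerical solution, in the style of the $\mathbbm{S}_4$ construction of \cite{webb}. Alternatively, I would find two or three separate $2$-antidistinguishing measurements, for example one containing $\ket{01}\bra{01},\ket{10}\bra{10}$ and one entangled, and average them.

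For the one-way LOCC part, I would use the uniqueness argument of Theorem~\ref{th3}. Neither party holds a qubit POVM element orthogonal to two distinct local states. So the starter can only eliminate one state per outcome, with elements proportional to the unique orthogonal projectors, and the second party must then eliminate one of the remaining three states.

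If Bob starts, his possible elements are proportional to $\ket{1}\bra{1},\ket{0}\bra{0},\ket{\eta_1^\perp}\bra{\eta_1^\perp},\ket{\eta_2^\perp}\bra{\eta_2^\perp}$.
\begin{itemize}
\item Outcome $\ket{1}$ eliminates $\zeta_1$ and leaves Alice with $\{\ket{1},\ket{+},\ket{-}\}$. These have $x_1+x_2+x_3=\frac12+\frac12+0=1$, which violates \eqref{condAS}, so the set is not antidistinguishable.
\item Hence that element must have zero weight. The pairs reachable through Bob's remaining outcomes are then enumerated, checking each residual three-state set on Alice's side against \eqref{condAS}.
\end{itemize}

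If Alice starts, I would likewise check each residual Bob set against \eqref{condAS}: $\{\ket{1},\ket{\eta_1},\ket{\eta_2}\}$, $\{\ket{0},\ket{\eta_1},\ket{\eta_2}\}$, $\{\ket{0},\ket{1},\ket{\eta_2}\}$ and $\{\ket{0},\ket{1},\ket{\eta_1}\}$. The last two contain an orthogonal pair and are antidistinguishable. The first two involve the close states $\ket{\eta_1},\ket{\eta_2}$ with $|\la\eta_1|\eta_2\ra|=\cos\frac{\pi}{12}$, and I expect them to fail. That forces the corresponding outcomes to vanish, and the remaining Alice elements must still sum to $\I_2$, as in Theorem~\ref{th3}.

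In both directions the goal is to exhibit a specific pair that is never eliminated, or to show the measurement cannot be completed. The main obstacle is this case bookkeeping. It must be combined with the constraint that the surviving projectors can still resolve the identity with nonzero weights, and verified carefully for each of the eight branches. If a residual set sits borderline, I would fall back on the SDP of Appendix~\ref{B} to certify it.
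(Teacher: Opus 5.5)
\textbf{Gap in the Bob-starts branch.} Your one-way LOCC argument for the case where Bob starts contains a false step. You apply \eqref{condAS} to $\{\ket{1},\ket{+},\ket{-}\}$ and conclude that this set is not antidistinguishable. But \eqref{condAS} is the criterion for \emph{pairwise non-orthogonal} triples, and this triple contains the orthogonal pair $\ket{+},\ket{-}$. Any qubit triple containing an orthogonal pair has $x_1+x_2+x_3=1$ exactly, yet it is always antidistinguishable. Here Alice simply measures $\{\ket{+}\bra{+},\ket{-}\bra{-}\}$ and excludes $\zeta_4$ or $\zeta_3$. You used exactly this fact yourself for Bob's residual sets $\{\ket{0},\ket{1},\ket{\eta_{1,2}}\}$ in the Alice-starts case.

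So the claim that Bob's $\ket{1}\bra{1}$ element must have zero weight is false. Applied consistently, the same reasoning also kills $\ket{0}\bra{0}$, since the residual set is $\{\ket{0},\ket{+},\ket{-}\}$. The two elements $\ket{\eta_1^\perp}\bra{\eta_1^\perp}$ and $\ket{\eta_2^\perp}\bra{\eta_2^\perp}$ cannot resolve $\I_2$ on their own. You would therefore end up ``proving'' that no one-way protocol started by Bob $2$-antidistinguishes the set at all. That intermediate claim is false, and the non-strongness conclusion would then rest on it.

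\textbf{Correct obstruction.} The real constraint is on Bob's own POVM, not on Alice's residual sets. Write the completeness condition $\sum_i c_i\ket{u_i}\bra{u_i}=\I_2$ as the Bloch-vector balance $\sum_i c_i\vec r_i=0$. The candidate elements $\ket{1},\ket{0},\ket{\eta_1^\perp},\ket{\eta_2^\perp}$ have Bloch vectors $(0,0,-1)$, $(0,0,1)$, $(-\tfrac{\sqrt3}{2},0,-\tfrac12)$ and $(-\tfrac12,0,-\tfrac{\sqrt3}{2})$.
\begin{itemize}
\item The $x$-components force zero weight on both $\eta^\perp$ elements. So Bob must measure in the computational basis and can only exclude $\zeta_1$ or $\zeta_2$.
\item Alice then holds $\{\ket{1},\ket{\pm}\}$ or $\{\ket{0},\ket{\pm}\}$. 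The same balancing kills her $\ket{0}\bra{0}$ (respectively $\ket{1}\bra{1}$) element, so she must measure in $\{\ket{\pm}\}$ and excludes only $\zeta_3$ or $\zeta_4$.
\item The reachable pairs are exactly $\{\zeta_1,\zeta_3\},\{\zeta_1,\zeta_4\},\{\zeta_2,\zeta_3\},\{\zeta_2,\zeta_4\}$, which is the paper's list. The pairs $\{\zeta_1,\zeta_2\}$ and $\{\zeta_3,\zeta_4\}$ are never excluded.
\end{itemize}

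\textbf{Other parts.} Your Alice-starts analysis is sound. Completed the same way, Alice is forced into the $\{\ket{\pm}\}$ basis, so every branch excludes $\zeta_3$ or $\zeta_4$ first, and the same two pairs are missed. Your argument for strong LOCC antidistinguishability via $\mathbbm{S}_3$ and your SDP route to global strong $2$-antidistinguishability match the paper. One minor correction: $\{\zeta_3,\zeta_4\}^\perp$ is spanned by the product vectors $\ket{+}\ket{\eta_1^\perp}$ and $\ket{-}\ket{\eta_2^\perp}$, so no entangled vector is needed there. This does not affect the SDP route.
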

\begin{proof}
    Strong global $2$-antidistinguishability is verified by a semi definite program described in Appendix \ref{C}. These states are strongly LOCC antidistinguishable as Alice has the strongly antidistinguishable set of states $\Big\{\ket{0},\ket{1},\ket{+},\ket{-}\Big\}$  (Note Example \ref{e3}).
    
    For strong LOCC $2$-antidistinguishability, Alice cannot eliminate two states on her own.  Alice can eliminate one state from her side. Suppose she eliminates $\ket{0}$, then Bob has to antidistinguish the set $\{\ket{1}, \ket{\eta_1}, \ket{\eta_2}\}$, which is not possible due to condition of \eqref{ADSpsi123}. same is true for elimination of $\ket{1}$ by Alice. If Alice eliminates $\ket{+}$ or $\ket{-}$, Bob can antidistinguish his set of states as it consists of $\ket{0}$ and $\ket{1}$. Henceforth, the eliminated pairs are $\Big\{\{\ket{\zeta_1},\ket{\zeta_2}\},\{\ket{\zeta_1},\ket{\zeta_3}\},\{\ket{\zeta_1},\ket{\zeta_4}\}, \{\ket{\zeta_2},\ket{\zeta_4}\}\Big\}$.
   So the protocol does not exhaust all the pairs. So the states are not strongly LOCC $2$-antidistinguishable.

   $\{\ket{0},\ket{1},\ket{\eta_1},\ket{\eta_2}\}$ are not strongly antidistinguishable set.
   If Bob starts, Bob's best strategy would be either Bob eliminate $\ket{0}$ or $\ket{1}$. Similarly, considering these cases with Alice's best measurement, the eliminated pairs are $\Big\{\{\ket{\zeta_1},\ket{\zeta_3}\},\{\ket{\zeta_1},\ket{\zeta_4}\},\{\ket{\zeta_2},\ket{\zeta_3}\}, \{\ket{\zeta_2},\ket{\zeta_4}\}\Big\}$.\\
   Thus, irrespective of the starter, the set of states are not strongly LOCC $2$-antidistinguishable.
\end{proof}
More stronger notion of nonlocality emerges with genuinely nonlocal set of states, where a $N$-partite set of states show some global property which is unachievable even if $(N-1)$ parties come together. It is a well explored field in the context of distinguishability of states\cite{Halder,gupta,ghosh_s}. We demonstrate an analogous notion in the case of antidistinguishability. For this proof, let us
consider the following three quantum states acting on $\mathbbm{C}^2\otimes \mathbbm{C}^2\otimes \mathbbm{C}^2$:\\
\bea\label{pr}
&&\ket{0}\ket{0}\ket{0}\nonumber\\
&&\ket{0}\ket{+}\ket{+}\nonumber\\
&&\ket{+}\ket{+}\ket{0}
\eea

\begin{thm}
     The states described in \eqref{pr} are globally antidistinguishable but LOCC not antidistinguishable in every bi-partition.
\end{thm}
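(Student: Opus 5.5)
The plan has two parts: verify global antidistinguishability directly from the three-state criterion \eqref{condAS}, and rule out LOCC antidistinguishability in each of the three bipartitions $A|BC$, $B|AC$, $C|AB$ using Theorem~\ref{th2}.

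For the global part, I will use $|\langle 0|+\rangle|^2=\tfrac12$ to compute the squared overlaps. Label the states in \eqref{pr} as $\ket{\omega_1}=\ket{000}$, $\ket{\omega_2}=\ket{0{+}{+}}$, $\ket{\omega_3}=\ket{{+}{+}0}$. The squared overlaps are $x_1=|\langle\omega_1|\omega_2\rangle|^2=\tfrac14$, $x_2=|\langle\omega_1|\omega_3\rangle|^2=\tfrac14$, and $x_3=|\langle\omega_2|\omega_3\rangle|^2=\tfrac12\cdot 1\cdot\tfrac12=\tfrac14$. Then $x_1+x_2+x_3=\tfrac34<1$. Also $(\tfrac34-1)^2=\tfrac1{16}\geqslant 4\cdot\tfrac1{64}=\tfrac1{16}$, so the second condition holds with equality. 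By \eqref{ADSpsi123} the set is globally (indeed strongly) antidistinguishable. Equivalently, every pairwise modulus equals $\tfrac12$, so the criterion of \cite{barett} applies directly.

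For the LOCC part, each bipartition makes the states bipartite product states. Theorem~\ref{th2} together with Lemma~\ref{l5} then says that LOCC antidistinguishability across a cut holds only if one side's local states are antidistinguishable on their own, with no communication. So I will list the reduced local sets for each side of each cut.
\begin{itemize}
\item Cut $A|BC$: the $A$ side holds $\{\ket0,\ket0,\ket+\}$, and the $BC$ side holds $\{\ket{00},\ket{++},\ket{+0}\}$.
\item Cut $B|AC$: the $B$ side holds $\{\ket0,\ket+,\ket+\}$, and the $AC$ side holds $\{\ket{00},\ket{0+},\ket{+0}\}$.
\item Cut $C|AB$: the $C$ side holds $\{\ket0,\ket+,\ket0\}$, and the $AB$ side holds $\{\ket{00},\ket{0+},\ket{++}\}$.
\end{itemize}

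The single-qubit sides are handled as in the earlier $d\otimes d$ theorem. Each consists of two identical states and a third state not orthogonal to them. A POVM element orthogonal to the repeated state kills two labels at once, and one orthogonal to the odd state is orthogonal to nothing else. So, as the paper argued before, the outcomes would have to be $c_1\ket{1}\bra{1}+c_2\ket{-}\bra{-}=\I$, which is impossible. This is the same argument used for $\mathbbm{S}_2$.

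The two-qubit sides need more care, and I expect them to be the main obstacle. In all three cases the pairwise squared overlaps are $\tfrac12,\tfrac12,\tfrac14$ in some order. For example, for the $BC$ side of $A|BC$: $|\langle00|{+}{+}\rangle|^2=\tfrac14$, $|\langle00|{+}0\rangle|^2=\tfrac12$, and $|\langle{+}{+}|{+}0\rangle|^2=\tfrac12$. Then $x_1+x_2+x_3=\tfrac54>1$, so the first condition of \eqref{condAS} fails and the set is not antidistinguishable. The point to check carefully is that \eqref{condAS} is a genuine necessary condition for any three pairwise non-orthogonal pure states, which is the result of \cite{Caves02}. It applies here because no two of these local states are orthogonal.

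Having shown that neither side of any cut is locally antidistinguishable, Theorem~\ref{th2} implies the states are not LOCC antidistinguishable across any bipartition. Since the full tripartite LOCC scenario is weaker than each bipartite one, tripartite LOCC antidistinguishability fails as well.
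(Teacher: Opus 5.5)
Your proposal is correct and follows the paper's proof essentially step for step. Global antidistinguishability comes from the three-state criterion \eqref{condAS}, since all pairwise moduli equal $\tfrac12$, and failure across each cut $A|BC$, $B|AC$, $C|AB$ comes from Theorem~\ref{th2}, by checking that neither side's local set is antidistinguishable. Your explicit computations (overlap sum $\tfrac54>1$ on every two-qubit side, and the impossibility of $c_1\ket{1}\bra{1}+c_2\ket{-}\bra{-}=\I$ on every single-qubit side) just fill in details the paper only asserts.
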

\begin{proof}
     The states are globally antidistinguishable as the states satisfy the conditions of \eqref{ADSpsi123}.

    There exist three possible bipartitions of the system. 
The corresponding sets are 
$\{\ket{0}\ket{0}, \ket{0}\ket{+}, \ket{+}\ket{+}\}, 
\{\ket{0}\ket{0}, \ket{+}\ket{+}, \ket{+}\ket{0}\}$, and 
$\{\ket{0}\ket{0}, \ket{0}\ket{+}, \ket{+}\ket{0}\}$. 
None of these three sets is antidistinguishable, as verified using \eqref{ADSpsi123}. 
Each single partition contains two non-orthogonal states, namely $\ket{0}$ and $\ket{+}$. 
According to Theorem \ref{th2}, local antidistinguishability requires that at least one party possesses an antidistinguishable set of states. 
For every bipartition considered, both the bipartition and the corresponding single partition consists of non-antidistinguishable states. This completes the proof.
\end{proof}
Needless to say, one can extend this notion for more than tripartite states using the same strategy.
\section{Conclusion}

In this work, we investigated the problem of exclusion of quantum states from the perspective of exclusion of states, with particular emphasis on the limitations imposed by LOCC protocols. 
We first analyzed the structure of LOCC antidistinguishability and identified sufficient conditions under which multipartite states can be locally antidistinguished. We demonstrated that, unlike state distinguishability, LOCC antidistinguishability of product states does not exhibit any asymmetry with respect to the initiating party. However, this symmetry breaks down in $2$-antidistinguishability tasks. 
Our main results establish the existence of \emph{nonlocality without entanglement} in the context of state exclusion. We proved that three bipartite product states can be globally antidistinguishable while failing to be LOCC antidistinguishable, thereby identifying the minimal number of states required for this phenomenon. We further extended this property to the case of $2$-antidistinguishability, where global protocols outperform LOCC strategies. In addition, we presented examples exhibiting strong and genuine forms of nonlocality, including tripartite product states that remain globally antidistinguishable but are not LOCC antidistinguishable across any bipartition.

These results highlight fundamental differences between distinguishability and antidistinguishability as operational tasks and establish exclusion-based nonlocality as a distinct manifestation of quantum nonlocality. Our work also raises several interesting open questions. In particular, it remains to determine whether the results proved for one-way LOCC protocols, such as Theorem~\ref{th3}, Proposition~\ref{p1}, and Proposition~\ref{th7}, continue to hold for arbitrary multi-round LOCC protocols. More generally, a complete characterization of optimal LOCC protocols for higher-order $x$-antidistinguishability remains an important open problem. Other promising directions include understanding the role of shared entanglement and separable measurements in narrowing the gap between global and LOCC strategies, as well as exploring applications of exclusion-based nonlocality in quantum communication and cryptographic protocols. We hope that our findings motivate further investigation of quantum state exclusion as a fundamental operational resource.

\appendix

\section{SDP of Proposition \ref{p1} and Proposition \ref{p2}}\label{B}
In this semi-definite program, we compute $2$-antidistinguishability of the set of states $\{\ket{\phi_i^B}\}_i$, where the suffix $B$ denotes the part of Bob in the set of states $\{\ket{\phi_i}\}_i$. We get the following optimization problem:
\bea\label{b1}
\mathcal{S'}&=&\max_{\{N_i\}_i}\left[1-\tr\big\{\left(\ket{\phi_1^B}\bra{\phi_1^B}N_{12} + \ket{\phi_2^B}\bra{\phi_2^B}N_{12}\right)\right.\nonumber\\
&& +\left. \left(\ket{\phi_1^B}\bra{\phi_1^B}N_{13} + \ket{\phi_3^B}\bra{\phi_3^B}N_{13}\right)\right.\nonumber\\
&& +\left. \left(\ket{\phi_1^B}\bra{\phi_1^B}N_{14} + \ket{\phi_4^B}\bra{\phi_4^B}N_{14}\right)\right.\nonumber\\
&& +\left. \left(\ket{\phi_2^B}\bra{\phi_2^B}N_{23}+ \ket{\phi_3^B}\bra{\phi_3^B}N_{23}\right)\right.\nonumber\\
&& +\left. \left(\ket{\phi_2^B}\bra{\phi_2^B}N_{24} + \ket{\phi_4^B}\bra{\phi_4^B}N_{24}\right)\right.\nonumber\\
&& +\left. \left(\ket{\phi_3^B}\bra{\phi_3^B}N_{34} + \ket{\phi_4^B}\bra{\phi_4^B}N_{34}\right)
\big\}\right]\nonumber\\
&\text{s.t.}& N_{ij}\geqslant 0, \forall i,j,  N_{12}+N_{13}+N_{14}+N_{23}+N_{24}+N_{34} =\I.\nonumber\\
\eea
Solving the above linear program, we deduce that the states $\{\ket{\phi_i^B}\}_i$ are not $2$-antidistinguishable, i.e., $\mathcal{S'}< 1$ when $\frac13<\epsilon\leqslant\frac12$. For $\epsilon=\frac12$, the optimum POVM operators are following:
\[
N_{12} =
\begin{pmatrix}
0.0053 & 0.0031 & 0.0467 & 0.0362 \\
0.0031 & 0.0018 & 0.0270 & 0.0209 \\
0.0467 & 0.0270 & 0.4123 & 0.3193 \\
0.0362 & 0.0209 & 0.3193 & 0.2474
\end{pmatrix}
\]
\[
N_{13} =
\begin{pmatrix}
0.0053 & 0.0451 & -0.0127 & 0.0362 \\
0.0451 & 0.3836 & -0.108 & 0.308 \\
-0.0127 & -0.108 & 0.0304 & -0.0867 \\
0.0362 & 0.308 & -0.0867 & 0.2474
\end{pmatrix}
\]
\[
N_{14} =
\begin{pmatrix}
0.0053 & 0.0451 & 0.0319 & -0.0213 \\
0.0451 & 0.3836 & 0.2712 & -0.1816 \\
0.0319 & 0.2712 & 0.1918 & -0.1284 \\
-0.0213 & -0.1816 & -0.1284 & 0.086
\end{pmatrix}
\]
\[
N_{23} =
\begin{pmatrix}
0.328 & -0.1413 & -0.0999 & 0.2849 \\
-0.1413 & 0.0608 & 0.043 & -0.1227 \\
-0.0999 & 0.043 & 0.0304 & -0.0867 \\
0.2849 & -0.1227 & -0.0867 & 0.2474
\end{pmatrix}
\]
\[
N_{24} =
\begin{pmatrix}
0.328 & -0.1413 & 0.2508 & -0.1679 \\
-0.1413 & 0.0608 & -0.108 & 0.0723 \\
0.2508 & -0.108 & 0.1918 & -0.1284 \\
-0.1679 & 0.0723 & -0.1284 & 0.086
\end{pmatrix}
\]
\[
N_{34} =
\begin{pmatrix}
0.328 & 0.1894 & -0.2168 & -0.1679 \\
0.1894 & 0.1093 & -0.1252 & -0.097 \\
-0.2168 & -0.1252 & 0.1433 & 0.111 \\
-0.1679 & -0.097 & 0.111 & 0.086
\end{pmatrix}
\]

For Lemma \ref{p2}, we solve the same linear problem for $\epsilon\leqslant\frac13$ and got $\mathcal{S'}=1$. For $\epsilon=\frac13$, the optimum measurement has the following elements:
\[
F_{12} =
\begin{pmatrix}
0 & 0 &  0 &  0 \\
0 & 0 &  0 &  0 \\
0 & 0 &  0.4 &  0.3265 \\
 0 & 0 &  0.3265 &  0.2667
\end{pmatrix}
\]

\[
F_{13} =
\begin{pmatrix}
0 &  0 & 0 &  0 \\
0 &  0.375 & -0.0968 &  0.3162 \\
0 & -0.0968 &  0.0250 & -0.0816 \\
0 &  0.3162 & -0.0816 &  0.2667
\end{pmatrix}
\]

\[
F_{14} =
\begin{pmatrix}
0 &  0 &  0 & 0 \\
0 &  0.3750 &  0.2904 & -0.1581 \\
0 &  0.2904 &  0.2250 & -0.1225 \\
0 & -0.1581 & -0.1225 &  0.0667
\end{pmatrix}
\]

\[
F_{23} =
\begin{pmatrix}
 0.3333 & -0.1178 & -0.0913 &  0.2981 \\
-0.1178 &  0.0417 &  0.0323 & -0.1054 \\
-0.0913 &  0.0323 &  0.025 & -0.0816 \\
 0.2981 & -0.1054 & -0.0816 &  0.2667
\end{pmatrix}
\]

\[
F_{24} =
\begin{pmatrix}
 0.3333 & -0.1178 &  0.2738 & -0.1491 \\
-0.1178 &  0.0417 & -0.0968 &  0.0527 \\
 0.2738 & -0.0968 &  0.2250 & -0.1225 \\
-0.1491 &  0.0527 & -0.1225 &  0.0667
\end{pmatrix}
\]

\[
F_{34} =
\begin{pmatrix}
 0.3333 &  0.2357 & -0.1826 & -0.1491 \\
 0.2357 &  0.1667 & -0.1291 & -0.1054 \\
-0.1826 & -0.1291 &  0.1000 &  0.0816 \\
-0.1491 & -0.1054 &  0.0816 &  0.0667
\end{pmatrix}
\]
 The element $F_{ij}$ eliminates $\ket{\phi_i^B}$ and $\ket{\phi_j^B}$ with $i,j\in\{1,\cdots,4\}$ and $i\neq j$.
 
\section{SDP of Proposition \ref{th7}}\label{C}
We have to exclude two states at each outcome out of four states. The problem statement is similar to \eqref{b1}. Using semi-definite programming, we find following elements for the optimum measurement of that task.
\[
P_{12}=
\begin{pmatrix}
0.0025 & -0.0059 &  0.0009 &  0 \\
-0.0059 &  0.0158 & -0.0003 & -0.0051 \\
0.0009 & -0.0003 &  0.0022 & -0.0051 \\
0 & -0.0051 & -0.0051 &  0.0140
\end{pmatrix}
\]
\[
P_{13} =
\begin{pmatrix}
0.4722 & 0.0657 & 0.4898 & 0 \\
0.0657 & 0.0250 & 0.0724 & 0 \\
0.4898 & 0.0724 & 0.5092 & 0 \\
0 & 0 & 0 & 0
\end{pmatrix}
\]

\[
P_{14} =
\begin{pmatrix}
0.5253 & -0.0598 & -0.4907 & 0 \\
-0.0598 & 0.0197 & 0.0485 & 0 \\
-0.4907 & 0.0485 & 0.4628 & 0 \\
0 & 0 & 0 & 0
\end{pmatrix}
\]
\[
P_{23} =
\begin{pmatrix}
0 & 0 & 0 & 0 \\
0 & 0.3232 & -0.0329 & 0.4460 \\
0 & -0.0329 & 0.0057 & -0.0540 \\
0 & 0.4460 & -0.0540 & 0.6476
\end{pmatrix}
\]
\[
P_{24} =
\begin{pmatrix}
0 & 0 & 0 & 0 \\
0 & 0.5873 & -0.0845 & -0.4408 \\
0 & -0.0845 & 0.0147 & 0.0592 \\
0 & -0.4408 & 0.0592 & 0.3384
\end{pmatrix}
\]
\[
P_{34} =
\begin{pmatrix}
0 & 0 & 0 & 0 \\
0 & 0.0290 & -0.0032 & 0 \\
0 & -0.0032 & 0.0055 & 0 \\
0 & 0 & 0 & 0
\end{pmatrix}
\]
$P_{ij}$ is the element which exclude $\ket{\zeta_i}$ and $\ket{\zeta_j}$, where $i\neq j$ and $i,j\in\{1,\cdots,4\}$. 

\bibliography{ref} 
\end{document}